\newtheorem{theorem}{Theorem}[section]
\newtheorem{lemma}[theorem]{Lemma}
\newtheorem{corollary}[theorem]{Corollary}
\newtheorem{observation}{Observation}
\newtheorem{definition}{Definition}
\DeclarePairedDelimiter{\ceil}{\lceil}{\rceil}
\DeclarePairedDelimiter{\floor}{\lfloor}{\rfloor}
\newcommand*{\QEDB}{\hfill\ensuremath{\square}}%
\newcommand{\kdots}{, \dots, } 
\newcommand{\rom}[1]{\expandafter\@showromancap\romannumeral{#1}}
\newcommand{\algA}{\emph{Algorithm ApproxSqrt}\xspace}
\newcommand{\algsA}{\emph{Algorithm Simple ApproxSqrt}\xspace}
\newcommand{\algB}{\emph{Algorithm Exact}\xspace}
\newcommand{\A}{\emph{ApproxSqrt}\xspace}
\newcommand{\sA}{\emph{Simple ApproxSqrt}\xspace}
\newcommand{\B}{\emph{Exact}\xspace}
\newcommand{\algLog}{\emph{Algorithm ApproxLog}\xspace}\newcommand{\aLog}{\emph{ApproxLog}\xspace}
\newcommand{\mfp}{\emph{Master Fingerprints}\xspace}
\newcommand{\comp}{\emph{Compressed Run}\xspace}
\newcommand{\comps}{\emph{Compressed Runs}\xspace}
\newcommand{\current}{\ensuremath{R_{NF}}-entry\xspace}
\newcommand{\currents}{\ensuremath{R_{NF}}-entries\xspace}
\newcommand{\finished}{\ensuremath{R_F}-entry\xspace}
\newcommand{\finisheds}{\ensuremath{R_F}-entries\xspace}
\newcommand{\single}{\ensuremath{R_S}-entry\xspace}
\newcommand{\singles}{\ensuremath{R_S}-entries\xspace}
\newcommand{\withoutlog}{WLOG.\xspace}
\newcommand{\slidingpairs}{\ensuremath{Fingerprint\ Pairs}\xspace}
\newcommand{\slidingpair}{\ensuremath{Fingerprint\ Pair}\xspace}
\newcommand{\ignore}[1]{}
\newcommand{\problemDefLongest}{\emph{Longest Palindromic Substring Problem}\xspace}
\newcommand{\problemDefAll}{\emph{Palindrome Problem}\xspace}
\title{Palindrome Recognition In The Streaming Model}
\author[1]{Petra Berenbrink}
\author[1,2]{Funda Erg{\"u}n}
\author[1]{Frederik Mallmann-Trenn}
\author[1]{Erfan Sadeqi Azer}
\affil[1]{Simon Fraser University}
\affil[2]{Indiana University}
\begin{document}

\maketitle

\begin{abstract}
A palindrome
is defined as a string which reads forwards the same as backwards, like, for example, the string ``racecar''. 
In the \problemDefAll, one tries to find all \emph{palindromes} in a given string.
In contrast, in the case of the  \problemDefLongest, the goal is to find an arbitrary one of the longest palindromes in the string.\\
In this paper we present three algorithms for finding w.h.p. palindromes in the streaming model  for the the above problems, where
at any point in time we are only allowed to use sublinear space.
We first present a one-pass randomized algorithm that solves the \problemDefAll. It has an \emph{additive} error and uses $O(\sqrt n)$ space. We also give two variants of the  algorithm which solve  related and practical problems.
The second algorithm determines the exact locations of {\em all} longest palindromes using two passes and $O(\sqrt n)$ space.
The third algorithm is a one-pass randomized algorithm,
which solves the \problemDefLongest. It has a \emph{multiplicative} error using only $O(\log(n))$ space.
Moreover, we give an $\Omega(n)$ lower bound for any algorithm finding the longest palindrome with probability $1$.
\end{abstract}
\vspace{-2.9mm}
\section{Introduction}\vspace{-2.3mm}
A palindrome 
is defined as a string which reads forwards the same as backwards, e.g., the string ``racecar''. 
In the \problemDefAll one tries to find all \emph{palindromes} (palindromic substrings) in an input string.
A related problem is the \problemDefLongest in which one tries to find 
any one of the longest palindromes in the input.\\
In this paper we regard the streaming version of both problems, where
the input arrives over time (or, alternatively, is read as a stream)
and the algorithms are allowed space sub linear in the size of the input.
%
Our first contribution is a one-pass randomized algorithm that solves the \problemDefAll w.h.p.. It has an \emph{additive} error and uses $O(\sqrt n$) space.
The second contribution is a two-pass algorithm which determines w.h.p. the exact locations of all longest palindromes. 
It uses the first algorithm as the first pass and uses $O(\sqrt n)$ space.
The third is a one-pass randomized algorithm which solves
the \problemDefLongest w.h.p.. It has a \emph{multiplicative} error using $O(\log(n))$ space.
Moreover, we give an $\Omega(n)$ lower bound for any algorithm finding the longest palindrome with probability $1$.
We also give two variants of the first algorithm which solve other related practical problems.

Palindrome recognition is important in computational biology. 
Palindromic structures can frequently be found in proteins and identifying them gives researchers 
hints about the structure of nucleic acids. For example, in \emph{nucleic acid secondary structure prediction}, one is interested in complementary palindromes which are considered in the appendix.
\newline

\noindent{\bf Related work} 
While palindromes are well-studied,
to the best of our knowledge there are no results for the streaming model.
Manacher \cite{Manacher:1975} presents a linear time online algorithm that reports at any time whether all symbols seen so far form a palindrome.
The authors of \cite{Apostolico:1995} show how to modify this algorithm in order to find all palindromic substrings in linear time (using a parallel algorithm).\\
%
Some of the techniques used in this paper have their origin in the streaming pattern matching literature. 
In the \emph{Pattern Matching Problem}, one tries to find all occurrences of a given pattern $P$ in a text $T$.
The first algorithm for pattern matching in the streaming model was shown in \cite{Porat:2009} and requires $O(\log(m))$ space.
The authors of \cite{Ergun:2010} give a simpler pattern matching algorithm with no preprocessing, as well as a related
streaming algorithm for estimating a stream's Hamming distance to $p$-periodicity.
Breslauer and Galil \cite{Breslauer:11} provide an algorithm which does not report false negatives and can also be run in real-time. 
All of the above algorithms in the string model take advantage of  Karp-Rabin fingerprints \cite{RabinKarp:1987}.
%
\newline
\noindent{\bf Our results} 
In this paper we present three algorithms, \A, \B, and  \aLog  for finding  palindromes and estimating their length in a given stream $S$ of length $n$. 
%
%
We assume that the work space is bounded while the output space is unlimited.\\ 
Given an index $m$ in stream $S$,  $P[m]$ denotes the palindrome of 
maximal length cantered at index $m$ of $S$. 
Our algorithms identify a palindrome $P[m]$ by its {\it midpoint} $m$ and by its length $\ell(m)$.
Our first algorithm  outputs all palindromes in $S$ and therefore solves the \problemDefAll. 
\begin{theorem}[\A]\label{the:firstpassisgood} 	
For any $\varepsilon \in [\sfrac{1}{\sqrt n},1]$  $\algA(S,\varepsilon)$ reports for every palindrome $P[m]$  in $S$ its midpoint $m$ as well as an estimate $\tilde\ell(m)$ (of $\ell(m)$) such that w.h.p.\footnote{We say an event happens \emph{with high probability} (w.h.p.) if its probability is at least $1-{1}/{n^c}$ for $c\in\mathbb{N}$.} $\ell(m) - \varepsilon \sqrt n < \tilde\ell(m) \leq \ell(m).$
The algorithm makes one pass over $S$, uses  $O( \sfrac{n}{\varepsilon})$ time, and $O( \sfrac{\sqrt n}{\varepsilon}  )$ space. 
\end{theorem}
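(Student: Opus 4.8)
The plan is to separate the claim into three pieces and prove each in turn: (i) the fingerprint subroutine that decides whether a candidate factor is a palindrome is one-sided and fails with probability polynomially small in $n$; (ii) storing fingerprints only at a sparse set of ``checkpoints'' changes the reported length by at most the stated additive term; and (iii) the space and running time are as claimed. For (i), fix a prime $q$ that is a large enough polynomial in $n$ and a base $r$ drawn uniformly from $\mathbb{Z}_q$, and while reading the stream maintain the Karp--Rabin fingerprint $\phi(i)$ of the prefix $S[1\kdots i]$ and the fingerprint $\psi(i)$ of the reversed prefix $S[i]S[i-1]\cdots S[1]$; both update in $O(1)$ time per symbol and take $O(1)$ machine words. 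From $\phi$ (and stored powers of $r$) one recovers the fingerprint of any factor $S[a\kdots b]$, and from $\psi$ the fingerprint of $\operatorname{rev}(S[a\kdots b])$, by the usual telescoping identities; hence $S[m-t+1\kdots m+t]$ is a palindrome iff the fingerprint of $S[m+1\kdots m+t]$ equals that of $\operatorname{rev}(S[m-t+1\kdots m])$, and this equality can be tested as soon as the algorithm has retained $\phi$ and $\psi$ at the positions $m-t,\,m,\,m+t$ (the odd-length case is identical after shifting one index). Equal strings have equal fingerprints, so the test never rejects a true palindrome; distinct strings collide with probability $O(1/q)$, so its only failure mode is accepting a non-palindrome.

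The substance of the argument is (ii). \algA retains $\phi$ and $\psi$ only at the checkpoint positions --- the multiples of $g := \lfloor \varepsilon\sqrt n\rfloor$ (which is $\geq 1$ because $\varepsilon\geq 1/\sqrt n$), of which there are $n/g = O(\sqrt n/\varepsilon)$ --- together with $O(1)$ running state, including the fingerprints at the current head $j$, which are always available exactly. I would show that for every centre $m$ this still lets the algorithm certify a palindrome whose length $\tilde\ell(m)$ is obtained from $\ell(m)$ by rounding down to the largest value for which every endpoint demanded by the fingerprint test above is a checkpoint (or the current head). Since consecutive checkpoints are $g$ apart, this rounding discards strictly less than $g\leq\varepsilon\sqrt n$, which gives $\tilde\ell(m) > \ell(m) - \varepsilon\sqrt n$. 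The reverse inequality $\tilde\ell(m)\leq\ell(m)$ is automatic: \algA outputs only a length it has explicitly verified by a fingerprint test, and --- conditioning on the event $\mathcal{E}$ that no collision occurs during the whole run --- every verified factor is a genuine palindrome and therefore no longer than the maximal palindrome $P[m]$ centred at $m$.

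It remains to handle (iii) and the probability. Over the whole run the algorithm performs at most $n^{O(1)}$ fingerprint comparisons --- at most one per (centre, length) pair it inspects --- so for a large enough polynomial $q$ a union bound gives $\Pr[\overline{\mathcal{E}}]\leq 1/n$; on $\mathcal{E}$ all reported pairs simultaneously satisfy $\ell(m)-\varepsilon\sqrt n < \tilde\ell(m)\leq\ell(m)$, which is exactly the ``w.h.p.'' clause. The stored data are $O(\sqrt n/\varepsilon)$ fingerprints and positions of $O(1)$ words each, giving the $O(\sqrt n/\varepsilon)$ space bound. Maintaining $\phi,\psi$ costs $O(1)$ per symbol, hence $O(n)\leq O(n/\varepsilon)$ in total, and all remaining work is charged to fingerprint comparisons; since there are $O(\sqrt n/\varepsilon)$ checkpoints, it suffices to show that each checkpoint triggers only $O(\sqrt n)$ comparisons, which then yields the $O(n/\varepsilon)$ time bound.

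\textbf{Main obstacle.} The delicate point is step (ii) together with the per-checkpoint count in (iii): an arbitrary centre $m$ need not be a checkpoint, yet the fingerprint test needs several positions around $m$ to be stored, and one must reconcile this with keeping only $O(\sqrt n/\varepsilon)$ retained fingerprints \emph{and} only $O(\sqrt n)$ comparisons per checkpoint. This is where one has to exploit structural facts about the family of maximal palindromes --- their organisation into a small number of arithmetic progressions, or ``runs'' --- so that the candidates still in play at any moment, and hence the tests actually performed, stay within budget (presumably the role of the \comps and \mfp constructions). Choosing the block size $g$, the rounding rule, and the amortisation so that the loss is exactly $\varepsilon\sqrt n$ and not a larger constant multiple of it is the part that needs the most care.
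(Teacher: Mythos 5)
Your proposal correctly reconstructs the outer shell of the argument --- Karp--Rabin fingerprints with a one-sided failure mode, checkpoints spaced $\lfloor\varepsilon\sqrt n\rfloor$ apart, the observation that a palindrome centred at $m$ is checked against checkpoint $c$ exactly when the head reaches $i=2m-c$, and the resulting additive loss of at most one checkpoint gap. That part matches the paper's Lemma \ref{lem:invariantsgood}. But there are two genuine gaps, and you have explicitly deferred the one that carries most of the weight of the theorem. First, the space bound does not follow from ``$O(\sqrt n/\varepsilon)$ fingerprints of $O(1)$ words each'': every long palindrome whose length is still being refined must keep its own record $(m,\tilde\ell(m),F^F(1,m),F^R(1,m))$ alive until the end of the stream, and on the input $S=a^n$ there are $\Theta(n)$ such midpoints, so the candidate list is linear unless compressed. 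The paper closes this with a chain of structural results (Lemma \ref{lem:midpointshavestructure}, Corollary \ref{cor:runhasstructure}, Lemma \ref{lem:palindromesOnOneSideOfRun}, Lemma \ref{lem:runconstantsize}, Observation \ref{obs:constantNumberOfCandi}): overlapping $\sqrt n$-palindromes form equally spaced runs over a word of the form $ww^Rww^R\cdots$, their lengths within each half of a maximal run follow an arithmetic progression, and hence an entire run is a lossless constant-size record from which every $R_S(m_j,i)$ can be recovered. You name this as the ``main obstacle'' and gesture at runs, but you do not prove any of it, and the same missing structure is what bounds the number of comparisons per checkpoint in your step (iii). As written, the proposal establishes neither the $O(\sqrt n/\varepsilon)$ space nor the $O(n/\varepsilon)$ time claim.

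Second, your mechanism cannot report palindromes with $\ell(m)<\sqrt n$ at all: the checkpoint test needs $F^F(1,m)$ and $F^R(1,m)$ at the midpoint, and these cannot be stored for every index $m$. The paper avoids this by keeping a sliding window of the last $2\sqrt n$ symbols together with the \slidingpairs --- $O(1/\varepsilon)$ fingerprints of lengths $\lfloor\varepsilon\sqrt n\rfloor,2\lfloor\varepsilon\sqrt n\rfloor,\dots$ anchored at the window midpoint $m=i-\sqrt n$ --- so that each index is tested against all these lengths in the single iteration in which it sits at the centre of the window; short palindromes never touch the checkpoints, and the same device supplies the initial estimate $\tilde\ell(m,m+\sqrt n)=\sqrt n$ for long ones (without which your rounding argument breaks for lengths just above $\sqrt n$, where the relevant checkpoint iteration $2m-c$ precedes the creation of the record at iteration $m+\sqrt n$). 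Both omissions are ideas the proof cannot do without, so the proposal is a plan rather than a proof.
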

The algorithm can easily be modified to report all palindromes $P[m]$ in $S$ with $\ell(m) \geq t $ and  no $P[m]$ with $\ell(m)<t- \varepsilon\sqrt n$ for some 
threshold $t\in\mathbb{N}$. For $t\leq \sqrt n$ one can modify the algorithm to report a palindrome $P[m]$ if and only if $\ell(m) \geq t $. Note, the algorithm is also $(1+\varepsilon)$-approximative.\\
Our next algorithm, \B, uses two-passes to solve the \problemDefLongest. It uses  \A as the first pass. 
In the second pass the algorithm finds the midpoints of all palindromes of length exactly $\ell_{\max}$ where $\ell_{\max}$ is the (initially unknown) length of the longest palindrome in $S$. 
%
%
\begin{theorem}[\B]\label{the:main}
\algB reports w.h.p. $\ell_{max}$ and $m$ for all palindromes $P[m]$ with a 
length of $\ell_{max}$.
The algorithm makes two passes over $S$,  uses $O(n)$ time, and $O(\sqrt n )$ space.
\end{theorem}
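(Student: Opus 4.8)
The plan is to use the output of \A from the first pass to pin down a small set of ``candidate midpoints'' and then verify, exactly, which of these are centers of palindromes of the true maximal length $\ell_{\max}$. After the first pass, for every palindrome $P[m]$ we hold an estimate $\tilde\ell(m)$ with $\ell(m) - \varepsilon\sqrt n < \tilde\ell(m) \le \ell(m)$ w.h.p. Taking $\varepsilon = 1/\sqrt n$ (so the additive error is $1$) is too expensive in space; instead I would run \A with $\varepsilon$ a small constant, giving space $O(\sqrt n)$, and let $\tilde\ell_{\max} = \max_m \tilde\ell(m)$. Every midpoint $m$ of a longest palindrome satisfies $\tilde\ell(m) > \ell_{\max} - \varepsilon\sqrt n \ge \tilde\ell_{\max} - \varepsilon\sqrt n$, so the set $C$ of candidate midpoints $m$ with $\tilde\ell(m) \ge \tilde\ell_{\max} - \varepsilon\sqrt n$ contains all true maximizers. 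The first thing to argue is that $|C| = O(\sqrt n)$: by a standard periodicity argument, midpoints of long palindromes that are close together force the string to be periodic in that region, so within any window of length $O(\sqrt n)$ only $O(1)$ — or at any rate $O(\sqrt n)$ over the whole stream — midpoints can have estimated length within an $\varepsilon\sqrt n$ band of the maximum; this bounds the amount of state we must carry into pass two. The candidate list $C$, each entry tagged with $\tilde\ell(m)$ and the known window $[m - \tilde\ell(m), m + \tilde\ell(m)]$ in which it is already confirmed to be a palindrome, is the only thing passed forward.

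In the second pass I would, for each candidate $m \in C$ in parallel, extend the palindrome outward from its already-verified radius $\tilde\ell(m)$ by comparing the incoming symbol $S[m + j]$ against $S[m - j]$ for $j = \tilde\ell(m)+1, \tilde\ell(m)+2, \dots$. The obstacle is that $S[m-j]$ has long since streamed past, so we cannot literally re-read it; this is exactly where Karp--Rabin fingerprints enter. For each candidate I maintain a rolling fingerprint of the reversed left arm and of the forward right arm and check equality incrementally; because the arms being compared start only $O(\varepsilon\sqrt n)$ symbols short of the true length and the true lengths are mutually within $O(\varepsilon\sqrt n)$, only $O(\varepsilon\sqrt n) = O(\sqrt n)$ symbols of extension per candidate are needed, and the fingerprints of the relevant substrings can be obtained from a single global fingerprint of the prefix read so far together with the stored offsets — this is the standard streaming trick of representing a substring fingerprint as a difference of two prefix fingerprints. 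Running this for all $O(\sqrt n)$ candidates simultaneously costs $O(\sqrt n)$ space and $O(1)$ amortized work per arriving symbol after an $O(\sqrt n)$-size update, i.e. $O(n)$ time overall. At the end, $\ell_{\max} = \max_{m \in C} \ell(m)$ is now known exactly, and we report every $m \in C$ whose verified length equals $\ell_{\max}$.

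Correctness and the w.h.p. guarantee follow by combining Theorem~\ref{the:firstpassisgood} (pass one is correct w.h.p.) with the fingerprint soundness: a fingerprint collision happens with probability $O(1/n^{c})$ for a suitably chosen prime, and a union bound over the $O(\sqrt n)$ candidates and $O(\sqrt n)$ comparisons each keeps the total failure probability below $1/n$. The main obstacle I anticipate is the combinatorial lemma bounding $|C|$ and, relatedly, controlling how far each candidate must be extended: one must rule out a pathological stream in which $\Omega(\sqrt n)$ midpoints all have near-maximal estimated length yet their true lengths differ wildly, which would blow up either the candidate count or the extension length. This is handled by the periodicity structure of palindromes — two overlapping long palindromes with nearby centers imply a short period, and a region with a short period can contribute only boundedly many distinct near-maximal palindromes — so the $O(\sqrt n)$ bound on both the number of candidates and the per-candidate extension length holds, and the space and time bounds in the statement follow.
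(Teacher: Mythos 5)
Your overall architecture matches the paper's (first pass with \A at constant $\varepsilon$, prune to near-maximal candidates, verify exactly in a second pass), but there are two genuine gaps. First, the case $\ell_{max}<\sqrt n$ is not handled, and your candidate-bounding argument breaks there: the periodicity/run structure you invoke only controls midpoints of palindromes of length at least $\ell^*$ when they sit within $\ell^*$ of each other, so when $\ell_{max}$ is small (say, constant) there can be $\Theta(n)$ midpoints whose estimates all lie within the $\varepsilon\sqrt n$ band of $\tilde\ell_{max}$, and $|C|$ is not $O(\sqrt n)$. The paper avoids this by running a separate exact procedure for short palindromes in pass one (Lemma~\ref{lem:smallellmax}) and a candidate-free sliding-window scan in pass two.

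Second, and more centrally, your space accounting for the second pass does not go through as stated. To extend a candidate's left arm you must prepend symbols $S[2m-i]$ that streamed by long ago; whether you store those symbols (as the paper does with the uncertain interval $I_1(m)$) or the prefix fingerprints $\phi^R(1,2m-i)$ for all positions in that interval (note these are needed at \emph{decreasing} past positions as $i$ increases, so a single rolling fingerprint does not suffice), each candidate requires $\Theta(\sqrt n)$ buffered state while its uncertain window is active. With $O(\sqrt n)$ candidates this is $O(n)$ space unless you prove that only $O(1)$ candidates are simultaneously inside their uncertain windows. That constant bound is the technical heart of the paper's proof: it needs the pruning threshold $\tilde\ell_{max}-\sqrt n/2$, the deletion of all non-middle members of each run (justified by Lemma~\ref{lem:maxinmid}), and Lemma~\ref{lem:palindromesOnOneSideOfRun} to show that within a run only the middle palindromes can survive the threshold, whence at most four uncertain intervals cover any index $i$. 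Your proposal gestures at periodicity but never establishes this simultaneity bound, and without it the claimed $O(\sqrt n)$ space and $O(n)$ time do not follow.
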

Arguably the most significant contribution of this paper is an algorithm which requires only logarithmic space.
In contrast to \A (Theorem \ref{the:firstpassisgood}) this algorithm has a multiplicative error  and it reports only one of the longest palindromes (see \problemDefLongest) instead of all of them due to the limited space.
\begin{theorem}[\aLog]\label{the:log}
For any $\varepsilon$ in $(0,1]$,
\algLog reports w.h.p. an arbitrary palindrome $P[m]$ of length at least $\ell_{max}/(1+\varepsilon)$. 
The algorithm makes one pass over $S$, uses $O(\frac{n\log(n)}{\varepsilon\log(1+\varepsilon)})$ time, and 
$O(\frac{\log(n)}{\varepsilon\log(1+\varepsilon)})$ 
 space. 
\end{theorem}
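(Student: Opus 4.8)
The plan is to reduce the problem to $\Theta(\log_{1+\varepsilon}n)$ independent ``fixed–scale'' detection problems, each solvable in essentially logarithmic space. For $0\le i\le \log_{1+\varepsilon} n$ put $L_i=\lceil(1+\varepsilon)^i\rceil$, and let detector $D_i$ be a one–pass procedure that, w.h.p., reports the midpoint of some palindrome of length at least $L_i$ whenever $\ell_{max}\ge L_i$, and never reports a midpoint $m$ with $\ell(m)<L_i$. Running all the $D_i$ in parallel and, at the end, returning the witness of the $D_i$ with the largest index that fired proves the theorem: if $\ell_{max}\in[L_{i^\star},L_{i^\star+1})$ then $D_{i^\star}$ fires, so the reported $P[m]$ has length in $[L_{i^\star},\ell_{max}]\subseteq[\ell_{max}/(1+\varepsilon),\ell_{max}]$, up to the harmless rounding in the $L_i$'s. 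Odd– and even–length palindromes are handled by two parallel copies, one tracking integer midpoints and one tracking half–integer midpoints, which does not affect the asymptotics.

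Each $D_i$ is built from Karp--Rabin fingerprints exactly as \algA is: while scanning $S$ we maintain the running forward fingerprint of $S[1\ldots j]$ and the running fingerprint of its reversal, and then a block $S[a\ldots b]$ is a palindrome iff its forward fingerprint equals the fingerprint of its reversal — two quantities that, as in Theorem~\ref{the:firstpassisgood}, are recoverable in $O(1)$ time from the running fingerprints together with $O(1)$ fingerprint \emph{snapshots} saved at positions $a-1$ and $b$. Thus $D_i$ maintains a small pool of ``live'' candidate midpoints together with the snapshots each of them still needs, and promotes a candidate through $O(\log_{1+\varepsilon} L_i)$ geometric rounds: a midpoint survives round $\ell$ once a single fingerprint test certifies a palindrome of radius $(1+\varepsilon)^\ell$ about it, and it is discarded the instant a test fails. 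A genuine palindrome passes every test it meets, so the midpoint $m^\star$ of a longest palindrome survives up to round $\lfloor\log_{1+\varepsilon}(\ell_{max}/2)\rfloor$ and is detected inside $D_{i^\star}$ — provided it is never evicted for lack of space, which is the content of the next point.

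The crux, and the step I expect to be the real work, is to keep both the pool of live midpoints \emph{and} the set of retained snapshots down to $O(1/\varepsilon)$ per scale, since a priori a window of length $\Theta(L_i)$ already contains $\Theta(L_i)$ candidate midpoints. The tool is a periodicity argument completely analogous to the one underlying \algA: if many midpoints inside an $O(L_i)$–long window simultaneously carry palindromes of radius $\Theta(L_i)$, that window is highly periodic, and inside a periodic stretch the palindromic midpoints form only a few arithmetic progressions, each storable in $O(\log n)$ bits — essentially the compressed–run / master–fingerprint bookkeeping of \algA, now confined to a single scale. The delicate part is proving that this compaction is \emph{lossless} for our purpose: it must never drop a midpoint that could still grow into a palindrome of length $\ge\ell_{max}/(1+\varepsilon)$, and in particular it must not drop $m^\star$. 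I expect establishing this invariant — together with the matching accounting of exactly which snapshots must be held, and for how long, so that a due fingerprint test can always be carried out — to be the main obstacle.

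Granting the compaction lemma, correctness is a union bound. Over the whole run there are $O(n\log_{1+\varepsilon}n)$ fingerprint comparisons, so fingerprints of $\Theta(\log n)$ bits make the probability that any of them is a false collision at most $1/n$; conditioned on no collision occurring, no $D_i$ ever reports a non–palindrome (no false positives), and, by the compaction lemma, $m^\star$ is never evicted, so the correct scale fires (no false negatives). For the resources: each of the $O(\log_{1+\varepsilon}n)=O(\log n/\log(1+\varepsilon))$ scales stores $O(1/\varepsilon)$ fingerprint snapshots of $O(\log n)$ bits, giving $O(\log n/(\varepsilon\log(1+\varepsilon)))$ space; per input symbol each scale spends $O(1/\varepsilon)$ time scanning its candidate pool, with the compaction amortizing to $O(1)$ per symbol, for $O(n\log n/(\varepsilon\log(1+\varepsilon)))$ time overall.
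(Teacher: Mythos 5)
Your high-level architecture is essentially the paper's: geometrically spaced length scales, fingerprint tests against reference positions whose spacing is proportional to their distance from the read head, and periodicity-based compression of the candidate pool. (The paper folds all scales into one leveled checkpoint structure rather than running independent detectors $D_i$, and it keeps a single running champion $m^*$ rather than taking a maximum over scales at the end, but these are cosmetic differences.) However, your proposal has a genuine gap, and it is exactly where you yourself flag it: the ``compaction lemma'' asserting that the live midpoints at each scale can be losslessly compressed into $O(1)$ arithmetic progressions is not proved, and it is the entire content of the space bound. Saying that a window full of long palindromes is ``highly periodic'' is not enough; you must identify \emph{which} candidates are grouped together and show that that particular group provably forms a single run in the sense of Definition~\ref{def:compressible}, so that Corollary~\ref{cor:runhasstructure} lets you reconstruct every member's midpoint and fingerprints from constant data.

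The paper's Lemma~\ref{lem:LcFormsarun} does this with a specific geometric argument you would need to reproduce in your framework: group each candidate $m$ by the last reference position $c$ against which it was successfully checked. Then every $m$ in that group satisfies $\frac{i+c'}{2} < m \le \frac{i+c}{2}$ (where $c'$ is the next reference position before $c$), so the group lives in an interval of length $\frac{c-c'}{2}$; and because the spacing satisfies $c-c' < \frac{i-c}{2}$, each such $m$ has $\ell(m) > \frac{c-c'}{2}$, so Lemma~\ref{lem:midpointshavestructure} forces the group to be a single run. Without an argument of this shape, your claim of $O(1/\varepsilon)$ storage per scale is unsupported, and a worst-case input such as $a^n$ defeats the naive pool. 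A second, smaller loose end: once a run is stored compressed, you must specify which single representative of the run gets tested at each ``promotion'' round and why testing only that one suffices to avoid evicting $m^\star$ (the paper tests only the first midpoint of each $L_c$ and migrates it between lists); your per-symbol time bound of $O(1/\varepsilon)$ per scale depends on this. The probabilistic part (union bound over fingerprint collisions) and the resource accounting are fine modulo these two points.
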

We also show two practical generalizations of our algorithms which can be  run simultaneously. These results are presented in the next observation and the next lemma.

\begin{observation}\label{ObsVariant}
For $\ell_{max} \geq \sqrt n$, there is an algorithm which reports w.h.p. the midpoints of all palindromes $P[m]$ with $ \ell(m) > \ell_{max} - \varepsilon \sqrt n$.
The algorithm makes one pass over $S$, uses  $O( \sfrac{n}{\varepsilon})$ time, and $O( \sfrac{\sqrt n}{\varepsilon}  )$ space. 
\end{observation}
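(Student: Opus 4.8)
The plan is to use \A essentially as a black box and to filter its output against a running estimate of $\ell_{max}$. We may assume $\varepsilon \le 1/4$, since for larger $\varepsilon$ we can instead run the procedure below with $\varepsilon' := 1/4$: its guarantee is then stronger, and its resource bounds $O(n)$ time and $O(\sqrt n)$ space are still within $O(n/\varepsilon)$ and $O(\sqrt n/\varepsilon)$. Run $\algA(S,\varepsilon/2)$; by Theorem~\ref{the:firstpassisgood} it reports, w.h.p., every palindrome $P[m]$ together with an estimate $\tilde\ell(m)$ satisfying $\ell(m)-(\varepsilon/2)\sqrt n < \tilde\ell(m) \le \ell(m)$, in one pass, $O(n/\varepsilon)$ time and $O(\sqrt n/\varepsilon)$ space. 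Alongside \A keep a single register $\tilde L$ holding the largest $\tilde\ell(\cdot)$ reported so far; at the end of the stream $\ell_{max}-(\varepsilon/2)\sqrt n < \tilde L \le \ell_{max}$. As \A emits its (compressed) output, retain only those reported midpoints $m$ with $\tilde\ell(m) > (1-2\varepsilon)\sqrt n$, and when the stream ends output exactly those retained $m$ with $\tilde\ell(m) > \tilde L - (3\varepsilon/2)\sqrt n$, evaluated with the final value of $\tilde L$.

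Correctness is a short computation on the event that all of \A's estimates are accurate, which holds w.h.p.\ by Theorem~\ref{the:firstpassisgood} and is the only randomness we rely on. If $\ell(m) > \ell_{max}-\varepsilon\sqrt n$ then, using $\ell_{max}\ge\sqrt n$, \[\tilde\ell(m) > \ell(m)-\tfrac{\varepsilon}{2}\sqrt n > \ell_{max}-\tfrac{3\varepsilon}{2}\sqrt n \ge (1-\tfrac{3\varepsilon}{2})\sqrt n > (1-2\varepsilon)\sqrt n,\] so $m$ is retained; and since $\tilde L\le\ell_{max}$, also $\tilde\ell(m) > \ell_{max}-\tfrac{3\varepsilon}{2}\sqrt n \ge \tilde L-\tfrac{3\varepsilon}{2}\sqrt n$, so $m$ is output. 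Conversely, any output $m$ has $\ell(m)\ge\tilde\ell(m) > \tilde L-\tfrac{3\varepsilon}{2}\sqrt n > \ell_{max}-2\varepsilon\sqrt n$, so we never report a palindrome far shorter than $\ell_{max}$ — the natural one-sided companion to the stated guarantee.

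The one point requiring care is space, and through it time. We cannot apply the final cut earlier, because $\tilde L$ only grows, so a midpoint's status may move from ``keep'' to ``discard'' but never back; hence the retained midpoints must be buffered throughout the pass. However, every retained midpoint belongs to a palindrome of length $> (1-2\varepsilon)\sqrt n$, and here I would invoke the periodicity structure of long palindromes already underlying the analysis of \A: within any window of $S$ of length $\ell$ the midpoints of palindromes of length $\ge\ell$ form a single arithmetic progression (two of them at midpoint-distance $d<\ell$ force $S$ to be periodic with period dividing $2d$ on the span of both), so globally the midpoints of all palindromes of length $> (1-2\varepsilon)\sqrt n$ are covered by $O\!\big(n/((1-2\varepsilon)\sqrt n)\big)=O(\sqrt n)$ arithmetic progressions — exactly the kind of \comps representation \A already maintains — each stored in $O(1)$ words. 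Thus the buffer uses $O(\sqrt n)$ space and the whole algorithm $O(\sqrt n/\varepsilon)$; maintaining $\tilde L$ costs $O(1)$ per reported run and the single closing sweep costs $O(\sqrt n)$, so the running time stays $O(n/\varepsilon)$. The main obstacle in turning this into a full proof is precisely marrying the filtering with that structural fact — checking that the ``long'' part of each run \A emits is itself a short arithmetic progression, and can therefore be retained compactly despite $\ell_{max}$ being revealed only at the very end of the stream.
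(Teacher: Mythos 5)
Your construction is, at its core, the same as the paper's: run \A, track the largest reported estimate, and at the end of the stream discard every midpoint whose estimate falls too far below it. The paper's proof of Observation~\ref{ObsVariant} is exactly this two-line reduction --- run \A and trim the final list before returning it --- deferring the details to the preprocessing phase of \algB (compute $\tilde\ell_{max}$, keep only the middle entries of each \finished, which by Lemma~\ref{lem:maxinmid} dominate their run, and delete the \singles with small estimates). Your filtering arithmetic is correct. Where you diverge is in treating the buffering of candidate midpoints as a fresh obstacle needing its own periodicity argument: for palindromes of length at least $\sqrt n$ it is not one, because \A already retains \emph{all} of them until iteration $n$ in the compressed list $\hat L_n$, which Observation~\ref{obs:constantNumberOfCandi} and the space analysis in the proof of Theorem~\ref{the:firstpassisgood} bound by $O(\sqrt n)$ words. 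The keep/discard decision therefore never has to be made online; one simply trims $\hat L_n$ once, after the pass, and your worry that ``the retained midpoints must be buffered throughout the pass'' dissolves.

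The one place your extra machinery does real work is the range $\ell_{max}-\varepsilon\sqrt n < \ell(m) < \sqrt n$, which is nonempty when $\sqrt n \le \ell_{max} < (1+\varepsilon)\sqrt n$: such palindromes are emitted by \A as short palindromes during the stream and are \emph{not} kept in $\hat L_n$, so they genuinely must be buffered until $\tilde L$ is known. Your sketch for this --- covering their midpoints by $O(\sqrt n)$ arithmetic progressions via the structure behind Lemma~\ref{lem:midpointshavestructure} --- is the right idea but, as you concede, incomplete: the set you retain is selected by \emph{estimates}, hence is only a subset of the equally spaced midpoints in each window, and a subset of an arithmetic progression need not be one; you still owe an argument that it (or a constant-size superset per window) can be maintained online in $O(1)$ words. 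The paper's own one-line proof silently skips this boundary case too, so this is a loose end shared with the paper rather than a flaw in your reduction; the cleanest repair is to store these near-$\sqrt n$ candidates as \singles and runs exactly as \A does for $\hat L_i$, and defer all length filtering to the final trim.
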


\begin{lemma}\label{lem:smallellmax} 
For  $\ell_{max} < \sqrt n$, there is an algorithm which reports w.h.p. $\ell_{max}$ and a $P[m]$ s.t. $\ell(m)=\ell_{max}$.
The algorithm makes one pass over $S$, uses  $O( n)$ time, and $O( \sqrt n )$ space. 
\end{lemma}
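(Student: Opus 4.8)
The idea is to run Manacher's classical linear-time palindrome algorithm~\cite{Manacher:1975} inside a sliding window. Since by hypothesis $\ell_{max}<\sqrt n$, every palindrome of $S$ is contained in a block of $O(\sqrt n)$ consecutive symbols, so it suffices to keep a buffer $B$ holding the last $2\lceil\sqrt n\rceil+1$ symbols of the stream; this costs $O(\sqrt n)$ space. For definiteness we describe the computation for odd-length palindromes, so for each center $m$ we want the radius $\ell(m)$ of the maximal palindrome centered at $m$; even-length palindromes are handled by the symmetric variant of the algorithm run in parallel (equivalently, after interleaving a fresh separator symbol between consecutive characters, done on the fly).

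Recall that Manacher's algorithm scans $S$ left to right while maintaining the center $C$ and right end $R=C+\ell(C)$ of the currently furthest-reaching palindrome: for the current center $m$ it first sets $\ell(m)\gets\min(R-m,\ell(2C-m))$ when $m<R$ and $\ell(m)\gets 0$ otherwise, and then increases $\ell(m)$ one step at a time as long as the two symbols bordering the current palindrome are equal. We perform exactly this computation but with a delay of $\lceil\sqrt n\rceil$: the center $m$ is finalized only once the stream pointer has advanced to $m+\lceil\sqrt n\rceil$, at which moment $B$ contains the symbols at positions $m-\lceil\sqrt n\rceil$ through $m+\lceil\sqrt n\rceil$, and hence every symbol that computing $\ell(m)$ inspects, all of which lie within distance $\ell(m)+1<\sqrt n$ of $m$. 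Furthermore $R-C=\ell(C)<\sqrt n$ together with $m<R$ forces both $C$ and the mirror index $2C-m$ to lie within $O(\sqrt n)$ of $m$, so it suffices to retain the radii $\ell(\cdot)$ of the last $O(\sqrt n)$ centers and discard older ones. The algorithm keeps the largest radius seen so far together with a witnessing center and outputs them at the end; should this largest radius ever reach $\sqrt n$ the hypothesis is violated, and the algorithm instead reports ``$\ell_{max}\ge\sqrt n$'', which is exactly what is needed when it is run simultaneously with the algorithm of Observation~\ref{ObsVariant}.

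Correctness is immediate once one checks that, under the hypothesis $\ell_{max}<\sqrt n$, the windowed execution performs the same operations on the same data as the ordinary non-streaming Manacher algorithm: every symbol lookup it needs is still present in $B$, and every radius $\ell(2C-m)$ it consults is still stored. Hence $\ell(m)$ is computed correctly for every center, so the reported value is $\ell_{max}$ and the reported center attains it; the algorithm is in fact deterministic, so the ``w.h.p.'' in the statement is automatic. For the resource bounds, processing a center costs $O(1)$ for the setup plus the cost of its extension steps, and by the standard amortized analysis of Manacher's algorithm the total number of extension steps over the whole stream is $O(n)$ because $R$ is nondecreasing and bounded by $O(n)$; the buffer and the retained radii occupy $O(\sqrt n)$ space. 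The one point that needs care -- and the part that would have to be spelled out carefully -- is precisely this locality claim: that no information Manacher's algorithm ever consults about a center $m$ falls outside the $\Theta(\sqrt n)$-window that the hypothesis lets us retain, and that running the bookkeeping of $C$, $R$ and the array $\ell(\cdot)$ with a delay does not disturb its correctness.
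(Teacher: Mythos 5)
Your proof is correct, but it takes a genuinely different route from the paper's. The paper does not use Manacher at all: it slides two Karp--Rabin fingerprints $F^R(m-k+1,m)$ and $F^F(m+1,m+k)$ along with the $2\sqrt n$-window and, at each new midpoint $m=i-\sqrt n$, only tests whether the palindrome at $m$ beats the current record, starting the comparison at $k=\ell_{i-1}+1$ and extending while the fingerprints agree. The amortization is then trivial --- the record length can increase at most $\sqrt n$ times over the whole stream, so the total work is $O(n)$ --- but the algorithm is randomized (fingerprint collisions), which is why the statement says ``w.h.p.''; it also only learns the radius of record-breaking centers, which suffices since the lemma asks for a single witness. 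Your windowed Manacher computes the exact radius of \emph{every} center, is deterministic (so ``w.h.p.'' is vacuous for you), and would even let you report all longest palindromes in one pass; the price is that you must carry the locality argument --- that the mirror center $2C-m$ and all inspected symbols stay within $\Theta(\sqrt n)$ of $m$ because $R-C=\ell(C)<\sqrt n$ --- and the bookkeeping for capping extensions at the buffer edge when the hypothesis $\ell_{max}<\sqrt n$ fails. Both arguments are sound and meet the stated $O(n)$ time and $O(\sqrt n)$ space bounds; the paper's choice is mainly a matter of reusing the fingerprint machinery it has already set up for the other algorithms.
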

Additionally, we show an almost matching bound for the additive error of \algA. In more detail, we will show that
any randomized one-pass algorithm that approximates the length of the longest palindrome up to an additive error of $\varepsilon\sqrt{n}$ with probability $1$ must use $\Omega(\sfrac{\sqrt{n}}{\varepsilon})$ space.
\vspace{-2.5mm}
\section{Model and Definitions}\label{sec:model}\vspace{-1.5mm}
Let $S\in \Sigma^n$ denote the input stream of length $n$ over an alphabet $\Sigma$\footnote{All soundness, space, and time complexity analyses assumes $|\Sigma|$ to be polynomial. One can use a proper random hash function for bigger alphabets.}. For simplicity we assume symbols to be positive integers, i.e., $\Sigma \subset \mathbb{N}$.
We define $S[i]$ as the symbol at index $i$ and $S[i,j]= S[i],S[i+1],\ldots S[j]$.
In this paper we use the streaming model: In one {\it pass} the algorithm goes over the whole input stream $S$, reading $S[i]$ in {\em iteration} $i$ of the pass.
In this paper we assume that the algorithm has a memory of size $o(n)$,
but the output space is unlimited.
We use the so-called word model where the space equals the number of $O(\log(n))$ registers (See \cite{Breslauer:11}).\\
$S$  contains an \emph{odd palindrome} of length $\ell$ with midpoint $m\in 
\{\ell\kdots n-\ell\}$ if $S[m-i]=S[m+i]$ for all $i\in\{1\kdots\ell\}$.
Similarly, $S$ contains an \emph{even palindrome} of length $\ell$ if  
$S[m-i+1]=S[m+i]$ for all $i\in\{1\kdots\ell\}$. In other words, a palindrome is odd if and only if its length is odd.
For simplicity, our algorithms assume  palindromes to be even - it is easy to adjust our results for finding odd palindromes by apply the algorithm to 
$S[1]S[1]S[2]S[2]\cdots S[n]S[n]$ instead of $S[1,n]$.\\
The maximal palindrome (the palindrome of maximal length) in $S[1,i]$ with midpoint $m$ is called $P[m,i]$ and the maximal palindrome in $S$ with midpoint $m$ is called $P[m]$ which equals $P[m,n]$.
We define $\ell(m,i)$ as the maximum length of the palindrome with midpoint $m$ in the substring $S[1,i]$.
The maximal length of the palindrome in $S$ with midpoint $m$ is denoted by $\ell(m)$. 
Moreover, for $z\in  \mathbb{Z}\setminus\{1 \kdots n \}$ we define $\ell(z) = 0$.
Furthermore,  for $\ell^* \in \mathbb{N}$ we define $P[m]$ to be an $\ell^*$-palindrome if $\ell(m)\geq \ell^*$. Throughout this paper, $\tilde\ell()$ refers to an estimate of $\ell()$.\\
We use the KR-Fingerprint, 
which was first defined by Karp and Rabin \cite{RabinKarp:1987} to compress strings and was later used in the streaming pattern matching problem (see \cite{Porat:2009}, \cite{Ergun:2010}, and \cite{Breslauer:11}).
For a string $S'$ we define  the forward fingerprint (similar to \cite{Breslauer:11}) and its reverse as follows.
$\phi_{r,p}^{F}(S') = \left( \sum_{i=1}^{|S'|}{S'[i] \cdot r^i} \right) \mod\ p $ $\phi_{r,p}^{R}(S') = \left( \sum_{i=1}^{|S'|}{S'[i] \cdot r^{l-i+1}} \right) \mod\ p,<$
where $p$ is an arbitrary prime number in $[n^4,n^5]$ 
and $r$ is randomly chosen from $\{1\kdots p\}$. 
We write $\phi^{F}$($\phi^{R}$ respectively) as opposed to $\phi_{r,p}^{F}$($\phi_{r,p}^{R}$ respectively) whenever $r$ and $p$ are fixed.
We define for  $1\leq i\leq j\leq n$ the fingerprint $F^F(i,j)$ as the fingerprint of $S[i,j]$, i.e., 
$F^F(i,j)  =\phi^{F}(S[i,j])= r^{-(i-1)}(\phi^{F}(S[1,j]) - \phi^{F}(S[1,i-1]))\text{ mod }p.$
Similarly,  $F^R(i,j) = \phi^{R}(S[i,j])= \phi^{R}(S[1,j])-r^{j-i+1}\cdot\phi^{R}(S[1,i-1])\text{ mod }p.$
For every $1\le i\le n-\sqrt{n}$ the fingerprints $F^F(1,i-1-\sqrt n)$ and $F^R(1,i-1-\sqrt n)$ are called \mfp.
Note that it is easy to obtain $F^F(i,j+1)$ by adding the term $S[j+1]r^{j+1}$ to $F^F(i,j)$. Similarly, we obtain $F^F(i+1,j)$ by subtracting $S[i]$ from $r^{-1}\cdot F^F(i,j)$. 
The authors of \cite{Breslauer:11} observe useful properties which we state in 
the appendix.
Finally, Section \ref{sec:LB} gives a lower bound  for any algorithm finding the longest palindrome with probability $1$.
\begin{figure}
\centering
\begin{minipage}{.49\textwidth}
  \centering
\includegraphics[scale=0.50]{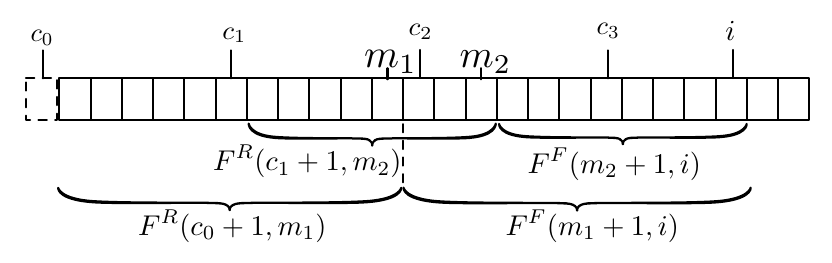}
  \caption{At iteration $i$ two midpoints $m_1$ and $m_2$ are checked.
Corresponding substrings are denoted by brackets.
Note, the distance from  $c_0$ to $m_1$  equals the distance from $m_1$ to $i$.
Similarly, the distance from  $c_1$ to $m_2$ equals the distance from $m_2$ to $i$. }
\label{fig:comparewithcheckpointsinsqrtn}
\end{minipage}
\begin{minipage}{.49\textwidth}
  \centering
\vspace*{+0.27cm}
\includegraphics[scale=0.40, trim = 20mm 0mm 0mm 0mm, clip]{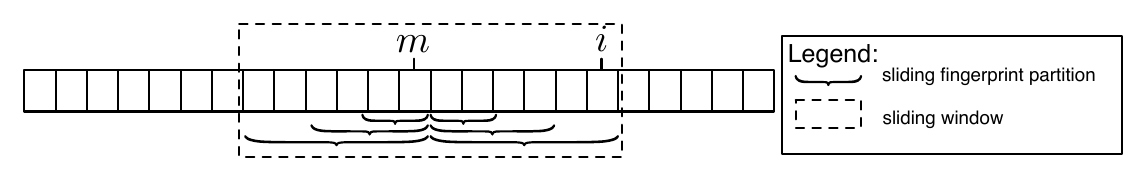}
  \caption{Illustration of the \slidingpairs after iteration $i$ of algorithm with $\sqrt n=6$, $\varepsilon=1/3$, and $m=i-\sqrt n$.   }
 \label{fig:slidingfingerprintpartition}
\end{minipage}
\vspace{-4.5mm}
\end{figure}
\vspace{-4.5mm}
\section{\algsA}\label{section:algsA}\vspace{-1.5mm}
In this section, we introduce a simple one-pass algorithm
which reports all midpoints and length estimates of palindromes in $S$.
Throughout this paper we use $i$ to denote the current index which the algorithm reads. 
\sA keeps the last 
$2\sqrt n$ symbols of $S[1,i]$ in the memory.  \\
It is easy to determine the exact length palindromes of length less than $\sqrt n$ since any
such palindrome is fully contained in memory at some point.
However, in order to achieve a better time bound the algorithm only {\it approximates} the length of short palindromes.  
It is more complicated to estimate the length of a palindrome with a length of at least $\sqrt n$. However,
\sA detects that its length is at least $\sqrt n$ and stores it as an \single (introduced later) in
a list $L_i$. The \single contains the midpoint as well as a length estimate of the palindrome, which is updated as $i$ increases. \\
In order to estimate the lengths of the long palindromes the algorithm designates certain indices of $S$ as \emph{checkpoints}. 
For every checkpoint $c$ the algorithm stores a fingerprint $F^R(1,c)$  enabling the algorithm to do the following. For every midpoint $m$ of a long palindrome:
Whenever the distance from a checkpoint $c$ to $m$ ($c$ occurs before $m$) equals the distance from $m$ to $i$, the algorithm 
compares the substring from $c$ to $m$ to the reverse of the substring from $m$ to $i$ by using fingerprints.
We refer to this operation as \emph{checking} $P[m]$ against checkpoint $c$. 
If $S[c+1,m]^R=S[m+1,i]$, then we say that $P[m]$ was \emph{successfully checked} with $c$ and the algorithm updates the length estimate for $P[m]$, $\tilde\ell(m)$.
The next time the algorithm possibly updates $\tilde\ell(m)$ is after $d$ iterations where $d$ equals the distance between checkpoints.
This distance $d$ gives the additive approximation. See Figure \ref{fig:comparewithcheckpointsinsqrtn} for an illustration.\\
%
%
%
%
%
%
We need the following definitions before we state the algorithm: For $k\in \mathbb{N}$ with $0 \leq k \leq \lfloor \frac{\sqrt n}{\varepsilon} \rfloor$ 
 checkpoint  $c_k$ is the index at position $k\cdot \lfloor\varepsilon \sqrt n\rfloor$ thus
checkpoints are  $\lfloor\varepsilon \sqrt n\rfloor$ indices apart.
Whenever we say that an algorithm stores a checkpoint, this means storing the data belonging to this checkpoint.
Additionally, the algorithm stores \slidingpairs, fingerprints of size
$\lfloor\varepsilon \sqrt n\rfloor, 2\lfloor\varepsilon \sqrt n\rfloor, \ldots$ starting or ending in the middle of the sliding window.
In the following, we first describe the data that the algorithm has in its memory 
after reading $S[1,i-1]$, then we describe the algorithm itself. Let $R_S(m,i)$ denote 
the representation of $P[m]$ which is stored at time $i$.
As opposed to storing $P[m]$ directly, the algorithm stores $m$, $\tilde\ell(m,i)$, $ F^F(1,m)$, and  $F^R(1,m)$.
\newline\noindent{\bf Memory invariants.}
Just before algorithm \sA
reads $S[i]$ it has stored the following information.  Note that, for ease of referencing,
during an iteration $i$ data 
structures are indexed with the iteration number $i$. \\
That is, for instance,   
$L_{i-1}$ is called $L_{i}$ after $S[i]$ is read. \vspace{-1.5mm}
\begin{enumerate}
\item The contents of the sliding window $S[i-2\sqrt n -1,i - 1 ]$. 
\item The two \mfp  $F^F(1,i-1)$ and $F^R(1,i-1)$.
\item A list of \slidingpairs: Let $r$ be the maximum integer s.t. $r\cdot\floor{\varepsilon\sqrt n}<\sqrt n$.\\
For $j\in\{\floor{\varepsilon\sqrt n},2\cdot\floor{\varepsilon\sqrt n}\kdots r\cdot\floor{\varepsilon\sqrt n},\sqrt n\}$
the algorithm stores the pair \\ $F^R((i-\sqrt n) - j ,(i-\sqrt n) -1)$, and $F^F(i-\sqrt n , (i-\sqrt n) + j -1)$. See Figure \ref{fig:slidingfingerprintpartition} for an illustration.

\item A list CL$_{i-1}$ which consists of all fingerprints of prefixes of $S$ ending at
already seen checkpoints, i.e.,  
$\text{CL}_{i-1}=\left[F^R(1,c_1), F^R(1,c_2)\kdots  F^R\big(1,c_{\floor[]{\sfrac{(i-1)}{\floor{\varepsilon \sqrt n}}}}\big)\right]$

\item\label{enu:Listofcandidates} A list $L_{i-1}$ containing representation of all $\sqrt n$-palindromes with a midpoint located in\\ $S[1, (i-1)-\sqrt n]$. The $j^{th}$ entry of $L_{i-1}$ has the form\\ $R_S(m_j,i-1)=(m_j, \tilde\ell(m_j,i-1), F^F(1,m_j) , F^R(1,m_j)) $ where
\begin{enumerate}[(a)]
\item $m_j$ is the midpoint of the $j^{th}$ palindrome in $S[1,(i-1) -\sqrt n]$ with a length of at least $\sqrt n$. Therefore, $m_j<m_{j+1}$ for $1\le j\le  |L_{i-1}|-1$.
\item $\tilde\ell(m_j,i-1)$ is the current estimate
%
of $\ell(m_j,i-1)$.
\end{enumerate}

\end{enumerate}
In the following, we explain how the algorithm maintains the above invariants.
\newline\noindent{\bf Maintenance.}
At iteration $i$ the algorithm performs the following steps. It is implicit that $L_{i-1}$ 
and $CL{i-1}$ become $L_i$ and $CL_i$ respectively. \vspace{-2.7mm}
\begin{enumerate}\itemsep0pt
\item Read $S[i]$, set $m= i-\sqrt n$. Update the sliding window to $S[m-\sqrt n, i ] = 
S[i-2\sqrt n, i]$
\item Update the \mfp to be $F^F(1,i)$ and $F^R(1,i)$.
\item\label{checkpointing} If $i$ is a checkpoint (i.e., a multiple of $\lfloor\varepsilon \sqrt n \rfloor$), then add $F^R(1,i)$ to CL$_i$.
\item\label{alg:onlyoutputlongpalindromes} Update all \slidingpairs:
For $j\in\{\floor{\varepsilon\sqrt n},2\cdot\floor{\varepsilon\sqrt n},\kdots r\cdot\floor{\varepsilon\sqrt n},\sqrt n\}$
\begin{itemize}
\item Update $F^R(m - j ,m -1)$ to $F^R(m - j+1 ,m)$ and $F^F(m , m + j -1)$ to $F^F(m +1, m+j )$.
\item\label{alg:insideslidingwindow} If $F^R(m - j+1 ,m)=F^F(m +1, m+j )$, then set $\tilde\ell(m,i)=j$.
\item\label{step:report short palindromes} If  $\tilde\ell(m,i) < \sqrt n$,
output $m$ and $\tilde\ell(m,i)$.
\end{itemize}
\item \label{enu:NewPalindromeInSlidingWindow}  If  $\tilde\ell(m,i)\geq \sqrt n$, add \emph{add} $R_S(m,i)$ to $L_{i}$:\\
$L_i= L_{i}\ \circ \ (m, \tilde\ell(m,i), F^F(1,m), F^R(1,m))$.
\item\label{enu:CompareWithMidpoints}  For all $c_k$ with $1\leq k \leq \lfloor \frac{i}{\lfloor \varepsilon \sqrt n\rfloor} 
\rfloor$  and $R_S(m_j,i)\in L_i$ with $i-m_j=m_j-c_k$, check if $\tilde\ell(m_j,i)$ can be updated:
\begin{itemize}\itemsep0pt 
\item\label{alg:comparison} If the left side of $m_j$ is the reverse of the right side of $m_j$ (i.e., $F^R(c_k +1, m_j)=F^F(m_j +1, i)$) then update $R_S(m_j,i)$ by updating $\tilde\ell(m_j,i)$ to $i-m_j$. 
\end{itemize}
\item\label{sALaststep} If $i=n$, then report $L_n$.
\end{enumerate} \vspace{-1.5mm}
In all proofs in this paper which hold w.h.p. we assume that fingerprints do not fail as we take less than $n^2$ fingerprints and by using the following Lemma, the probability that a fingerprint fails is at most $1/n^{c+2}$. \vspace{-1.4mm}
\begin{lemma}\label{lem:breslauer2}
(Theorem 1 of \cite{Breslauer:11}) For two arbitrary strings $s$ and $s'$ with $s \not= s'$ the probability that $\phi^F(s) = \phi^F(s')$ is smaller than $\sfrac{1}{n^{c+2}}$ for some $c\in\mathbb{N}$.
\end{lemma} 
Thus, by applying the union bound the probability that no fingerprint fails is at least $1-n^{-c}$.
The following lemma shows that the \sA finds all palindromes along with the estimates as stated in Theorem \ref{the:firstpassisgood}.
\sA does not fulfill the time and space bounds of Theorem \ref{the:firstpassisgood}; we will later show how to improve its 
efficiency. The proof can be found in the appendix.
\begin{lemma}\label{lem:invariantsgood}
For any $\varepsilon$ in  $[\sfrac{1}{\sqrt n},1]$  $\A(S,\varepsilon)$ reports for every palindrome $P[m]$ in $S$ its midpoint $m$ as well as an estimate $\tilde\ell(m)$  such that w.h.p. $\ell(m) - \varepsilon \sqrt n < \tilde\ell(m) \leq \ell(m).$
\end{lemma} \vspace{-3.5mm}
\section{A space-efficient version}\label{section:algA}\vspace{-1.5mm}
In this section, we show how to modify \sA so that it matches the time and space requirements of Theorem \ref{the:firstpassisgood}. 
The main idea of the space improvement is to store the lists $L_i$ in a compressed form.
\newline
\noindent{\bf Compression}\label{def:runtuple}
It is possible in the simple algorithm for $L_i$ to have linear length. 
In such cases $S$ contains many overlapping palindromes which show a certain {\it periodic}
pattern as shown in Corollary \ref{cor:runhasstructure}, which our algorithm exploits
to compress the entries of $L_i$. This idea was first introduced in \cite{Porat:2009}, and is used in \cite{Ergun:2010}, and \cite{Breslauer:11}. More specifically, our technique is a modification of the compression in \cite{Breslauer:11}.
In the following, we give some definitions in order to show how to compress the list.
First we define a \emph{run} which is a sequence of midpoints of overlapping palindromes.
\begin{definition}[$\ell^*-$Run]\label{def:compressible}
Let $\ell^*$ be an arbitrary integer and $h\geq 3$. Let $m_1,m_2,m_3\kdots m_h$ be consecutive midpoints of $\ell^*$-palindromes in $S$. $m_1 \kdots m_h$ form an \emph{$\ell^*$-run} if $m_{j+1}-m_j\leq \ell^*/2$ for all $j\in\{1\kdots h-1\}$.
\end{definition}
In Corollary \ref{cor:runhasstructure} we show that $m_2-m_1= m_3-m_2 = \cdots = m_h - m_{h-1}$. 
We say that a run is maximal if the run cannot be extended by other palindromes. More formally:
\begin{definition}[Maximal $\ell^*-$Run]\label{defMaximalRun} 
An $\ell^*$-run over $m_1\kdots m_h$  is \emph{maximal} it satisfies both of the following:
\begin{inparaenum}[i)]
\item  $\ell(m_1 - (m_2 - m_1)) < \ell^*,$
\item $\ell(m_h + (m_2 - m_1)) < \ell^*.$
\end{inparaenum}
\end{definition}
\sA stores palindromes explicitly in $L_i$, i.e., $L_{i}=\ [R_S(m_1,i); \dots ; R_S(m_{|L_{i}|},i) ]$ where
$R_S(m_j,i)=(m_j, \tilde\ell(m_j,i), F^F(1,m_j) , F^R(1,m_j))$, for all $j\in \{1,2\kdots h\}$.
The improved 
\algA stores these midpoints in a compressed way in list $\hat L_i$.
 \A
distinguishes among three cases: Those palindromes which \vspace{-2.7mm}
\begin{enumerate}
\item are not part of a $\sqrt n$-run are stored explicitly as before. We call them \singles. Let $P[m,i]$ be such a palindrome. After iteration $i$ the algorithm stores $R_S(m,i)$.
\item form a maximal $\sqrt n$-run are stored in a data structure called \finished. Let  $m_1\kdots m_h$ be the midpoints of a maximal $\sqrt n$-run.
The data structure stores the following information. 
\begin{itemize}
\item $m_1$, $m_{2} - m_{1}$, $h$, $\tilde\ell(m_1, i)$, $\tilde\ell(m_{\left\lfloor \frac{ 1+h}{2} \right\rfloor},i)$, $\tilde\ell(m_{\left\lceil \frac{ 1+h}{2} \right\rceil}, i)$, $\tilde\ell(m_h, i)$,
 \item $F^F(1,m_1)$, $F^R(1,m_1)$, $F^F(m_1 + 1,m_{2} )$, $F^R(m_1+1,m_{2})$
\end{itemize}
\item form a $\sqrt n$-run which is not maximal (i.e., it can possibly be extended) in a data structure called \current.
The information stored in an \current is the same as in an \finished, but it does not contain the entries:  
$\tilde\ell(m_{\left\lfloor \frac{ 1+h}{2} \right\rfloor},i)$, $\tilde\ell(m_{\left\lceil \frac{ 1+h}{2} \right\rceil}, i)$, and $\tilde\ell(m_h, i)$.
\end{enumerate}
The algorithm stores only the estimate (of the length) and the midpoint of the following palindromes explicitly. 
\begin{itemize}
\item $P[m]$ for an \single (Therefore all palindromes which are not part of a $\sqrt n$-run)
\item $P[m_1]$, $P[m_{\lfloor \sfrac{(h+1)}{2}\rfloor}]$, $P[m_{\lceil  \sfrac{(h+1)}{2} \rceil}]$, and $P[m_h]$ for an \finished
\item $P[m_1]$ for an \current. 
\end{itemize}
In what follows we refer to the above listed palindromes as \emph{explicitly stored} palindromes.
We argue in Observation \ref{obs:constantNumberOfCandi} that in any interval of length $\sqrt n$ the number of \emph{explicitly stored} palindromes is bounded by a constant.
%
%
%
\vspace{-2.7mm}
\subsection{\algA}\label{sub:sublinearversion}  \vspace{-1.5mm}
In this subsection, we describe some modifications of \sA in order to obtain a space complexity of $O(\frac{\sqrt n}{\varepsilon})$ and a total running time of  $O(\frac{n}{\varepsilon})$. 
\A is the same as \sA, but it compresses the stored palindromes.
\A uses the same memory invariants as \sA, but it uses $\hat L_i$ as opposed to $L_i$.\\
\A uses the first four steps of \sA. Step \ref{enu:NewPalindromeInSlidingWindow}, Step \ref{enu:CompareWithMidpoints}, and Step \ref{sALaststep} are replaced. 
The modified Step \ref{improvement1} ensures that there are at most two \singles per interval of length $\sqrt n$.
Moreover, Step \ref{enu:CompareWithMidpoints} is adjusted since \A stores only the length estimate of explicitly stored palindromes.

\begin{enumerate}
\setcounter{enumi}{4}
\item\label{improvement1} 
If  $\tilde\ell(m,i)\geq \sqrt n$, obtain $\hat L_{i}$ by adding the palindrome with midpoint $m(=i-\sqrt n)$ to $\hat L_{i-1}$ as follows:
\begin{enumerate}
\item The last element in $\hat L_i$ is the following 
\current \newline $\big(m_1,m_{2} - m_{1} ,h, \tilde\ell(m_1, i), F^F(1,m_1),F^R(1,m_1 ),F^F(m_1 + 1,m_{2} ),F^R(m_1+1,m_{2})\big).$ 
\begin{enumerate} 
\item\label{stepaddtorun} If the palindrome can be added to this run, i.e., $m=m_1+h(m_2-m_1)$,   
then we increment the $h$ in the \current by 1.
\item\label{enu:AlgoFinishesRun} 
If the palindrome cannot be added: 
Store $P[m,i]$ as an \single: \\
$\hat L_i= \hat L_{i}\ \circ \ (m, \tilde\ell(m,i), F^F(1,m), F^R(1,m))$.
Moreover, convert the \current into the \finished by adding
$\tilde\ell(m_{\left\lfloor \frac{ 1+h}{2} \right\rfloor},i)$, $\tilde\ell(m_{\left\lceil \frac{ 1+h}{2} \right\rceil}, i)$ and $\tilde\ell(m_h, i)$:
First we calculate $m_{\lfloor \frac{ 1+h}{2} \rfloor}= m_1+\left(\lfloor \frac{ 1+h}{2} - 1 \rfloor \right)(m_2-m_1).$ One can calculate
$m_{\lceil \frac{ 1+h}{2} \rceil}$
similarly.
For $m'\in \{m_{\left\lfloor \frac{ 1+h}{2} \right\rfloor}, m_{\left\lceil \frac{ 1+h}{2} \right\rceil}, m_h \}$  calculate
$\tilde\ell(m',i)=$\\$\underset{i-2\sqrt n \leq j \leq i}{\max}\{ j-m'\ \mid\ \exists c_{k}\text{ with }j-m' = m'-c_k \text{ and } F^R(c_k + 1,m')=F^F(m' + 1,j)  \}.$
\end{enumerate}

\item The last two entries in $\hat L_i$ are stored as \singles and together with $P[m,i]$ form a $\sqrt n$-run. Then remove the entries of the two palindromes out of $\hat L_{i-1}$ and add a new \current with all three palindromes to $\hat L_{i-1}$:\\
 $m_1,\tilde\ell(m_1, i), F^F(1,m_1),F^F(1,m_2),F^R(1,m_1 ),F^R(1,m_2 ),m_{2} - m_{1} , h=3 $. 
Retrieve $F^F(m_1 + 1,m_{2} )$ and $F^R(m_1+1,m_{2})$.


\item Otherwise, store $P[m,i]$ as an \single: 
$\hat L_i= \hat L_{i}\ \circ \ (m, \tilde\ell(m,i), F^F(1,m), F^R(1,m))$
\end{enumerate}

\item This step is similar to step \ref{enu:CompareWithMidpoints} of \sA the only difference is that we check only for explicitly stored palindromes if they can be extended outwards. \footnote{This step is only important for the running time.}
\item If $i=n$. If the last element in $\hat L_i$ is an \current, then convert it into an \finished as in \ref{enu:AlgoFinishesRun}. Report $L_n$.
\end{enumerate}  
\vspace{-2.5mm}
\subsection{Structural Properties}\vspace{-1.5mm}
In this subsection, we prove structural properties of palindromes.
These properties allow us to compress (by using \singles and \finisheds) overlapping palindromes $P[m_1] \kdots P[m_h]$ in such a way that at any iteration $i$ all the information stored $R_S(m_1,i) \kdots R_S(m_h,i)$ is available. The structural properties imply, informally speaking, 
that the palindromes are either far from each other, leading to a small number of them, or they are overlapping and it is possible to compress them.
Lemma \ref{lem:midpointshavestructure} shows this structure for short intervals containing at least three palindromes. Corollary \ref{cor:runhasstructure} shows a similar structure for palindromes of a run which is used by \A.
We first give the common definition of periodicity. 
\begin{definition} [\emph{period}]
A string $S'$ is said to have period $p$ if it consists of repetitions
of a block of $p$ symbols.
Formally, $S'$ has period $p$ if $S'[j] = S[j+p] \text{ for
all }j = 1\kdots |S'| - p$. \footnote{ Here, $p$ is called a period for $S'$ even if $p>|S'|/2$}
\end{definition}
\begin{lemma}\label{lem:midpointshavestructure}
Let $m_1 < m_2 < m_3 < \dots < m_h$ 
be indices in $S$ that are consecutive midpoints of $\ell^*$-palindromes for an arbitrary natural number $\ell^*$. 
If $m_{h}-m_1\leq \ell^*$, then
\begin{enumerate}[(a)]
\item $m_1,m_2,m_3\kdots m_h$ are equally spaced in $S$, i.e., $|m_{2}-m_{1}|=|m_{k+1}-m_{k}|$  $\forall k\in\{1\kdots h-1\}$
\item  $S[m_1+1,m_h]=\begin{cases} (ww^R)^{\frac{h-1}{2}} & h\text{ is odd }\\ (ww^R)^{\frac{h-2}{2}}w & h \text{ is even } \end{cases}$, where $w=S[m_1+1,m_2]$.
\end{enumerate}
\end{lemma}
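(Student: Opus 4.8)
The plan is to derive both parts from the palindrome symmetry relations together with elementary facts about periods. Recall that $m_k$ being the midpoint of an $\ell^*$-palindrome means $S[m_k - i + 1] = S[m_k + i]$ for all $i \in \{1 \kdots \ell^*\}$; since $m_h - m_1 \le \ell^*$, the palindromes centered at $m_1$ and at $m_h$ are long enough to "reach across" the whole interval $[m_1+1, m_h]$ and even somewhat beyond on each side.

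First I would show that $S[m_1+1, m_h]$ has period $2(m_2 - m_1)$. Let $d = m_2 - m_1$. Reflecting the substring through the midpoint $m_1$ and then through the midpoint $m_2$ composes two reflections into a translation by $2d$; concretely, for an index $j$ in the appropriate range, $S[j] = S[2m_1 - j + 1]$ (palindrome at $m_1$) and $S[2m_1 - j + 1] = S[2m_2 - (2m_1 - j + 1) + 1] = S[j + 2d]$ (palindrome at $m_2$), so $S[j] = S[j+2d]$ wherever both steps are valid, which the bound $m_h - m_1 \le \ell^*$ guarantees on $[m_1+1, m_h - 2d]$. This gives period $2d$. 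Moreover, within one period the string is a palindrome-half followed by its reverse: writing $w = S[m_1+1, m_2]$, the palindrome at $m_2$ forces $S[m_2+1, m_2 + d] = w^R$, so one full period equals $w w^R$, and iterating the period yields part (b) — the exponent being $(h-1)/2$ or $(h-2)/2$ depending on parity of $h$, with the trailing $w$ when $h$ is even because then $m_h - m_1 = (h-1)d$ is an odd multiple of $d$.

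For part (a), I would argue by contradiction / minimality. Suppose the midpoints are not equally spaced; let $k$ be the smallest index with $m_{k+1} - m_k \ne d$. Since $S[m_1+1, m_h]$ has period $2d$ (hence also period $d$ in the sense that shifting by $d$ maps $w$-blocks to $w^R$-blocks), the index $m_k + d$ is also the midpoint of a palindrome of length at least $\ell^* - (m_h - m_1) \ge$ enough, and it lies strictly between $m_k$ and $m_{k+1}$ unless $m_{k+1} - m_k = d$; this contradicts the $m_j$ being \emph{consecutive} midpoints of $\ell^*$-palindromes. One has to be a little careful that $m_k + d$ genuinely has palindrome length $\ge \ell^*$ and not merely $\ge$ something smaller — here is where the hypothesis $m_h - m_1 \le \ell^*$ is used again, to ensure the periodicity window is wide enough that the reflected palindrome around $m_k + d$ still has full length $\ell^*$.

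The main obstacle I expect is bookkeeping the index ranges: making precise exactly on which sub-interval the two-reflection argument is valid, and checking that the "interpolated" midpoint $m_k + d$ (or $m_1 + d$, to rule out a gap before $m_2$) really inherits a palindrome of length at least $\ell^*$ rather than a shorter one. All the algebra is routine once the window endpoints are pinned down; the parity case split in (b) is purely cosmetic. I would present the period claim as the central lemma-within-the-proof and then get (a) and (b) as short consequences.
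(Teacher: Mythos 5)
Your derivation of the period $2(m_2-m_1)$ on $S[m_1+1,m_h]$ by composing the reflections about $m_1$ and $m_2$ is correct, and it is essentially the same mechanism as the paper's base case; granting part (a), part (b) does follow from it. The genuine gap is in part (a), which is the heart of the lemma. Write $d=m_2-m_1$ and let $k$ be minimal with $m_{k+1}-m_k\ne d$. First, your contradiction only covers the case $m_{k+1}-m_k>d$: if $m_{k+1}-m_k<d$, the interpolated point $m_k+d$ does not lie between $m_k$ and $m_{k+1}$, so no clash with consecutiveness arises. Ruling that case out requires showing that a midpoint at a non\mbox{-}multiple\mbox{-}of\mbox{-}$d$ offset forces $ww^R$ to have a shorter period $2r$, which in turn plants an $\ell^*$-palindrome midpoint strictly between $m_1$ and $m_2$; this is the ``$m_j=m_1+|w|q+r$'' step of the paper's inductive argument, and it is absent from your sketch.

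Second, even when $m_{k+1}-m_k>d$, the contradiction needs $m_k+d$ to be the midpoint of an \emph{$\ell^*$}-palindrome --- a shorter palindrome centered there is perfectly consistent with $m_1,\dots,m_h$ being consecutive midpoints of $\ell^*$-palindromes. The periodicity available at that stage, derived from the equally spaced $m_1,\dots,m_k$, only covers a window ending near $m_k+\ell^*$, which falls short by $d$ of the interval $[m_k+d-\ell^*+1,\,m_k+d+\ell^*]$ needed for a full $\ell^*$-palindrome at $m_k+d$; your own hedge (``length at least $\ell^*-(m_h-m_1)$'', a quantity that can be $0$) is a symptom of this. The missing ingredient is to use the palindrome centered at $m_{k+1}$ itself to push the $ww^R$ structure out to $m_{k+1}+\ell^*$ --- but that reflection extends the lattice coherently only once one already knows $m_{k+1}-m_1$ is a multiple of $d$, i.e., after the step you skipped. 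The paper breaks this circularity with a single induction over the midpoints carrying the strengthened invariant that $S[m_1+1,m_j+\ell^*]$ (not merely $S[m_1+1,m_j]$) is a prefix of $ww^Rww^R\cdots$. A further, minor point: your derivation of (b) already invokes $m_h-m_1=(h-1)d$, i.e., part (a), so the two parts cannot be presented in the order you propose.
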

\begin{proof}
Given $m_1,m_2 \kdots  m_h$ and $\ell^*$ we prove the following stronger claim by induction over the midpoints $\{m_1\kdots m_j\}$.
\begin{inparaenum}[(a')]
\item $m_1,m_2\kdots m_j$ are equally spaced.\label{itm:sta}
\item  $S[m_1+1,m_j+\ell^*]$ is a prefix of $ww^Rww^R...$ .\label{itm:stb}
\end{inparaenum}\\
\emph{Base case} $j=2$:
Since we assume $m_1$ is the midpoint of an $\ell^*$-palindrome and $\ell^*\geq m_h-m_1\geq m_2-m_1=|w|$,
we have that $S[m_1-|w|+1,m_1]=w^R$.
Recall that $\ell(m_2) \geq \ell^*\geq|w|$ and thus, $S[m_1+1,m_2+|w|]=ww^R$.\\ 
We can continue this argument and derive that 
$S[m_1 + 1, m_2 +\ell^*]$ is a prefix of $ww^R\dots ww^R$.
(\ref{itm:sta}') for $j=2$ holds trivially.\\
\emph{Inductive step} $j-1\rightarrow j $:
Assume (\ref{itm:sta}') and (\ref{itm:stb}') hold up to $m_{j-1}$. We first argue that $|m_j-m_1|$ is  a multiple of $|m_2-m_1|=|w|$.
Suppose $m_j=m_1+|w|\cdot q+r$ for some integers $q\geq 0$ and $r\in \{1\kdots|w|-1\}$.
Since $m_j\leq m_{j-1}+\ell^*$, the interval $[m_1+1,m_{j-1}+\ell^*]$ contains $m_j$.
Therefore, by inductive hypothesis, $m_j-r$ is an index where either $w$ or $w^R$ starts.
This implies that the prefix of $ww^R$(or $w^Rw$) of size $2r$ is  a palindrome and the string $ww^R$(or $w^Rw$) has period  $2r$.
On the other hand, by consecutiveness assumption, there is no midpoint of an $\ell^*$-palindrome in the interval $[m_1+1,m_2-1]$.
does not have a period of $2p$, a contradiction. We derive that $m_j-m_1$ is multiple of $|w|$. \\
Hence, we assume $m_{j}=m_{j-1}+q\cdot |w|$ for some $q$.
The assumption that $m_j$ is a midpoint of an $\ell^*$-palindrome beside the inductive hypothesis implies (\ref{itm:stb}') for $j$.
The structure of $S[m_{j-1}+|w|-\ell^*+1,m_{j-1}+|w|+\ell^*]$ shows that $m_{j-1}+|w|$ is a midpoint of an $\ell^*$-palindrome.
This means that $m_j=m_{j-1}+|w|$.
This gives (\ref{itm:sta}') and yields the induction step.
\end{proof} \vspace{-1.33mm}
\vspace{-1.8mm}
Corollary \ref{cor:runhasstructure} shows the structure of overlapping palindromes and is essential for the compression.
The main difference between Corollary \ref{cor:runhasstructure} and Lemma \ref{lem:midpointshavestructure} is the required distance between the midpoints of a run. Lemma \ref{lem:midpointshavestructure} assumes that every palindrome in the run overlaps with all other palindromes. 
In contrast, Corollary \ref{cor:runhasstructure} assumes that every palindrome $P[m_j]$ overlaps with $P[m_{j-2}]$, $P[m_{j-1}]$, $P[m_{j+1}]$, and $P[m_{j+2}]$. It can be proven by an induction over the midpoints and using Lemma \ref{lem:midpointshavestructure}. The proof is in the appendix.\vspace{-1.5mm}
\begin{corollary}\label{cor:runhasstructure}
If  $m_1,m_2\kdots m_h$ form an $\ell^*$-run for an arbitrary natural number $\ell^*$ then \vspace{-0.5mm}
\begin{enumerate}[(a)]
\item $m_1,m_2,m_3\kdots m_h$ are equally spaced in $S$, i.e., $|m_{2}-m_{1}|=|m_{k+1}-m_{k}|$  $\forall k\in\{1\kdots h-1\}$
\item  $S[m_1+1,m_h]=\begin{cases} (ww^R)^{\frac{h-1}{2}} & h\text{ is odd }\\ (ww^R)^{\frac{h-2}{2}}w & h \text{ is even } \end{cases}$, where $w=S[m_1+1,m_2]$.
\end{enumerate}
\end{corollary}
Lemma \ref{lem:palindromesOnOneSideOfRun} shows the pattern for the lengths of the palindromes in each half of the run. This allows us to only store a constant number of length estimates per run.
The proof can be found in the appendix.
\begin{lemma}\label{lem:palindromesOnOneSideOfRun}
At iteration i, let $m_1,m_2,m_3,...,m_h$ be midpoints of a maximal $\ell^*$-run in $S[1,i]$ for an arbitrary natural number $\ell^*$. For any midpoint $m_j$, we have:\\
$\ell(m_j,i)= 
	\begin{cases} 
	   \ell(m_1,i)+(j-1)\cdot (m_2-m_1)   &  j< {\frac{h+1}{2}} \\
	   \ell(m_h,i)+(h-j)\cdot (m_2-m_{1}) &  j> {\frac{h+1}{2}} 
	\end{cases}$
\end{lemma}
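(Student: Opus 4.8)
The plan is to exploit the periodic structure of the run established in Corollary \ref{cor:runhasstructure}, namely that $S[m_1+1,m_h]$ is a prefix of $(ww^R)^\infty$ with $w = S[m_1+1,m_2]$ and $|w| = m_2-m_1$. By symmetry I will only argue the case $j < \frac{h+1}{2}$ (the palindrome lies in the left half of the run); the case $j > \frac{h+1}{2}$ is completely symmetric, swapping the roles of $m_1$ and $m_h$ and replacing $(j-1)$ by $(h-j)$. Actually, it suffices to prove the single-step relation $\ell(m_{j},i) = \ell(m_{j-1},i) + (m_2-m_1)$ for every $j$ with $2 \le j < \frac{h+1}{2}$, since the closed form then follows by a trivial induction starting from $m_1$.

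For the single step, fix such a $j$ and write $d = m_2-m_1$. The key claim is that the maximal palindrome around $m_j$ (within $S[1,i]$) is exactly the maximal palindrome around $m_{j-1}$ shifted by $d$ and extended by $d$ on each side. First I would show $\ell(m_j,i) \ge \ell(m_{j-1},i) + d$: by Corollary \ref{cor:runhasstructure} the block structure $(ww^R)^*$ around the midpoints guarantees that $S$ restricted to the window $[m_j - \ell(m_{j-1},i) - d,\; m_j + \ell(m_{j-1},i) + d]$ — which still lies inside $[m_1+1-\text{(something)},\, m_h]$-governed periodic region because $j$ is in the left half and $m_j + \ell(m_{j-1},i)+d$ stays bounded by roughly $m_h$ — is a concatenation of $w$'s and $w^R$'s arranged palindromically about $m_j$. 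This needs the observation that $\ell(m_{j-1},i) \ge \ell^*$ (it is an $\ell^*$-palindrome) together with $m_{j-1} + \ell(m_{j-1},i) \le i$ and the run being contained in $S[1,i]$, so the extended window is still within the read prefix and within the periodic stretch.

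The reverse inequality $\ell(m_j,i) \le \ell(m_{j-1},i) + d$ is the subtler half, and I expect it to be the main obstacle. The point is that a palindrome around $m_j$ that is longer than $\ell(m_{j-1},i)+d$ would, after shifting back by $d$, force a palindrome around $m_{j-1}$ of length exceeding $\ell(m_{j-1},i)$ — \emph{provided} the relevant symbols agree, which again follows from the $(ww^R)^*$ structure, but one must be careful that shifting back by $d$ does not run off the left end of the periodic region (index $m_1$) or off the front of the stream. Here is where maximality of the run is used: condition (i) of Definition \ref{defMaximalRun}, $\ell(m_1 - d) < \ell^*$, pins down that the periodic pattern genuinely breaks just to the left of $m_1$, so the estimate $\ell(m_{j-1},i)$ for an interior-left midpoint cannot be "artificially" capped by the stream boundary in a way that decouples it from $\ell(m_j,i)$. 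Concretely I would argue the contrapositive: if $\ell(m_j,i) > \ell(m_{j-1},i)+d$, then reading off symbols via the block structure shows $S[m_{j-1}-\ell(m_{j-1},i),\,m_{j-1}+\ell(m_{j-1},i)+1]$ is already palindromic around $m_{j-1}$ (since all these indices still fall in the periodic region as $j-1 \ge 1$ and $j-1 < \frac{h+1}{2}$), contradicting maximality of $\ell(m_{j-1},i)$. Combining the two inequalities gives the single-step identity, and induction on $j$ from the base $j=1$ (where the formula reads $\ell(m_1,i) = \ell(m_1,i)$) completes the left-half case; the right-half case is symmetric, anchored at $m_h$ and using condition (ii) of Definition \ref{defMaximalRun} instead.
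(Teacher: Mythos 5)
Your overall plan --- periodicity from Corollary \ref{cor:runhasstructure}, a consecutive-difference identity $\ell(m_j,i)=\ell(m_{j-1},i)+(m_2-m_1)$, and induction on $j$ --- is a legitimate reorganization of the paper's argument, and your checks against the stream boundaries $1$ and $i$ are fine. The gap sits exactly where the lemma is nontrivial: you assert that the extended window $[m_j-\ell(m_{j-1},i)-d,\;m_j+\ell(m_{j-1},i)+d]$ ``is still within the periodic stretch'' and is ``a concatenation of $w$'s and $w^R$'s.'' That is false in general. Set $m_0:=m_1-(m_2-m_1)$ and $k:=\ell(m_0,i)$; the truth (which is what you are trying to prove) is $\ell(m_j,i)=j|w|+k$, so every palindrome in the left half of the run overhangs the block structure $S[m_1+1,m_h]=ww^Rww^R\cdots$ by the same $k$ symbols on each side: its left end is $m_0-k+1$, strictly left of $m_1+1$, and its right end $m_{2j}+k$ can exceed $m_h$. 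Corollary \ref{cor:runhasstructure} says nothing about these overhang symbols, and the entire content of the lemma is that the overhang is the same $k$ for every $j$. Both halves of your single step secretly need this: the lower bound needs $S[m_0-k+1,m_0]^R=S[m_{2j}+1,m_{2j}+k]$ (the two overhangs must match), and the upper bound needs $S[m_{2j}+k+1]=S[m_{2j-2}+k+1]$ (to transport the extra matched pair at $m_j$ back to $m_{j-1}$); neither pair of positions lies inside $[m_1+1,m_h]$ in general, so ``reading off symbols via the block structure'' does not apply there.

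The missing ingredient --- the heart of the paper's proof --- is that the right contexts of $m_0$ and of every $m_{2j}$ coincide for $\ell^*$ symbols: $S[m_{2j}+1,m_{2j}+\ell^*]=S[m_0+1,m_0+\ell^*]$, both being prefixes of $w^Rww^R\cdots$ (this uses that $m_{2j}$ is itself the midpoint of an $\ell^*$-palindrome, not merely the corollary). Consequently the extension of $P[m_j]$ beyond the block $[m_0+1,m_{2j}]$ equals the longest suffix of $S[1,m_0]$ reversing a prefix of $S[m_0+1,m_0+\ell^*]$, i.e.\ equals $\ell(m_0,i)$, which is $<\ell^*$ by maximality condition (i) of Definition \ref{defMaximalRun} --- this, rather than ``the pattern breaks just left of $m_1$,'' is where run-maximality enters. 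With that observation your induction closes (or one can skip the induction entirely and write $\ell(m_j,i)=j|w|+\ell(m_0,i)$ for each $j$, as the paper does); without it, both inequalities of your single step remain unproved.
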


\vspace{-2.5mm}
\subsection{Analysis of the Algorithm}\label{sub:analysisA}\vspace{-1.5mm}
We show that one can convert \singles into a run and vice versa and \A's maintenance of \finisheds and \currents does not impair the length estimates.
%
The following lemma shows that one can retrieve the length estimate of a palindrome as well as its fingerprint from an \finished. \vspace{-1.5mm}
\begin{lemma}\label{lem:runconstantsize}
At iteration $i$, the \finished over $m_1, m_2\kdots m_h$ is a lossless compression of \\
$\ [R_S(m_1,i); \dots ; R_S(m_{h},i) ]$
\end{lemma}
Let \comp be the general term for \finished and \current.
We argue that in any interval of length $\sqrt n$ we only need to store at most two single palindromes and two \comps.
Suppose there were three \singles, then, by Corollary \ref{cor:runhasstructure}, they form a $\sqrt n$-run since they overlap each other. Therefore, the three \singles would be stored in a \comp. For a similar reason there cannot be more than two \comps in one interval of length $\sqrt n$. We derive the following observation.
\begin{observation}\label{obs:constantNumberOfCandi}
For any interval of length $\sqrt n$ there can be at most two \singles and two \comps in $L^*$. 
\end{observation}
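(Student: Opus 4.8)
The plan is to prove Observation~\ref{obs:constantNumberOfCandi} by a direct counting argument built on the structural Corollary~\ref{cor:runhasstructure}. Fix an arbitrary interval $I$ of length $\sqrt n$ in $S$. First I would handle the \singles: suppose for contradiction that $L^*$ contains three \singles with midpoints $m, m', m''$ (in increasing order) all lying in $I$. Since $I$ has length $\sqrt n$, all pairwise distances are at most $\sqrt n$, and in particular $m'-m \le \sqrt n/2$ or the three of them span a distance $\le \sqrt n$ so that consecutive gaps are $\le \sqrt n / 2$ (more carefully: among three points in an interval of length $\sqrt n$, at least two consecutive gaps are $\le \sqrt n/2$, and in fact I only need that $m''-m \le \sqrt n$, which is exactly the hypothesis of Lemma~\ref{lem:midpointshavestructure} with $\ell^*=\sqrt n$, and the stronger run condition follows). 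Each of $P[m],P[m'],P[m'']$ is a $\sqrt n$-palindrome by the definition of what gets stored in $L^*$. Hence $m,m',m''$ satisfy Definition~\ref{def:compressible} and form (are contained in) a $\sqrt n$-run; but then, by the maintenance rules of \A (Step~\ref{improvement1}, case (b) and case (a)\ref{stepaddtorun}), these midpoints would have been merged into a \current or \finished rather than kept as \singles --- contradiction. Therefore at most two \singles have midpoints in $I$.

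Next I would bound the \comps. A \comp (either a \current or an \finished) represents a maximal $\sqrt n$-run; I need that $I$ can overlap the midpoint-range of at most two such runs. Suppose three distinct maximal $\sqrt n$-runs each had a midpoint inside $I$. Pick one midpoint from each, say $m_1 < m_2 < m_3$, all in $I$, so $m_3 - m_1 \le \sqrt n$. Then $m_1, m_2, m_3$ are three consecutive-or-not midpoints of $\sqrt n$-palindromes whose total span is $\le \sqrt n$; by the same reasoning as above (taking, if necessary, all $\sqrt n$-palindrome midpoints between $m_1$ and $m_3$, which only shrinks the gaps), they all belong to a single $\sqrt n$-run. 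By maximality, a maximal run containing $m_1$ must also contain $m_2$ and $m_3$, so the three runs coincide --- contradiction. Hence at most two \comps have a midpoint in $I$, which gives the claimed bound.

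**Main obstacle.** The delicate point is going from "three midpoints of $\sqrt n$-palindromes lying in an interval of length $\sqrt n$" to "they lie in a common $\sqrt n$-run" in the precise sense of Definition~\ref{def:compressible}, because that definition requires \emph{consecutive} midpoints with gaps $\le \sqrt n/2$, whereas three arbitrary points in an interval of length $\sqrt n$ only guarantee pairwise distance $\le \sqrt n$ and a single gap $\le \sqrt n/2$. The fix is to insert all intervening $\sqrt n$-palindrome midpoints: once we include every $\sqrt n$-palindrome midpoint between the extreme two, consecutive gaps can only shrink, and Lemma~\ref{lem:midpointshavestructure}(a) forces them to be equal, hence each $\le \sqrt n / 2$ (if some gap exceeded $\sqrt n/2$ while the total span is $\le \sqrt n$, there could be at most one such gap, but equal spacing with $\ge 3$ midpoints then forces the span $> \sqrt n$, a contradiction --- so all gaps are $\le \sqrt n/2$ and Definition~\ref{def:compressible} applies). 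I would also need to remark that "having a midpoint in $I$" for a \comp is the right notion because \A never splits a run, so each maximal run corresponds to exactly one stored \comp; this is immediate from the maintenance description. With these observations in place, the counting is routine and the observation follows.
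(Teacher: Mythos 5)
Your proposal is correct and follows essentially the same route as the paper: the paper's own (very terse) argument likewise says that three \singles in an interval of length $\sqrt n$ would form a $\sqrt n$-run by Corollary~\ref{cor:runhasstructure} and hence be stored as a \comp, with the analogous reasoning for \comps. You additionally supply the detail the paper glosses over --- passing from ``three midpoints within distance $\sqrt n$'' to the consecutive, equally spaced gaps of size at most $\sqrt n/2$ required by Definition~\ref{def:compressible} --- which is a welcome tightening rather than a different approach.
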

We now have what we need in order to prove Theorem \ref{the:firstpassisgood}; the proof is given in the appendix.\vspace{-1.5mm}
%
%
%
%
%
\vspace{-3.5mm}
\section{\algB}\label{section:algB}\vspace{-1.5mm}
This section describes  \algB which determines the exact length of the longest palindrome in $S$ using $O(\sqrt n)$ 
space and two passes over $S$.\\
For the first pass this algorithm runs \A$(S,\frac{1}{2})$ (meaning that $\varepsilon=\sfrac{1}{2}$) and
the variant of \A described in Lemma \ref{lem:smallellmax} simultaneously.
The first pass returns $\ell_{max}$ (Lemma  \ref{lem:smallellmax}) if $\ell_{max}<\sqrt n$.
Otherwise, the first pass (Theorem \ref{the:firstpassisgood})  returns for every palindrome $P[m]$,
with $\ell(m) \geq \sqrt n$, an estimate satisfying $\ell(m) - \sfrac{\sqrt n}{2}  < \tilde\ell(m) \leq \ell(m)$ w.h.p..
\\
The algorithm for the second pass is determined by the outcome of the first pass.
For the case $\ell_{max}<\sqrt n$, it uses the sliding window to find all $P[m]$ with $\ell(m)=\ell_{max}$.
If  $\ell_{max}\geq\sqrt n$, then the first pass only returns an additive $\sfrac{\sqrt n}{2}$-approximation of the palindrome lengths.
We define the \emph{uncertain intervals} of $P[m]$ to be: $
I_1(m)=S[m-\tilde\ell(m)-\sfrac{\sqrt n}{2}+1,m-\tilde\ell(m)]$ and
$I_2(m)=S[m+\tilde\ell(m)+1,m+\tilde\ell(m)+\sfrac{\sqrt n}{2}]. 
$
The algorithm uses the length estimate calculated in the first pass to delete all \singles (Step \ref{alg:deleteshortones}) which cannot be the longest palindromes. Similarly, the algorithm (Step \ref{alg:onlykeepthemiddle}) only keeps the middle entries of \finisheds since these are the longest palindromes of their run (A proof can be found in the appendix).
In the second pass, \algB stores $I_1(m)$ for a palindrome $P[m]$ if it was not deleted. 
\algB compares the symbols of $I_1(m)$ symbol by symbol to $I_2(m)$ until the first mismatch is found. Then the algorithm knows the exact length $\ell(m)$  and discards $I_1(m)$.
The analysis will show, at any time the number of stored uncertain intervals is bounded by a constant.
\newline
\noindent{\bf First Pass}
Run the following two algorithms simultaneously:\vspace{-1.5mm}
\begin{enumerate}
\item \A$(S,\sfrac{1}{2})$. Let $L$ be the returned list. 
\item Variant of \A (See Lemma \ref{lem:smallellmax}) which reports $\ell_{max}$ if $\ell_{max}< \sqrt n$. 
\end{enumerate}
\noindent{\bf Second Pass}\vspace{-1.5mm}
\begin{itemize}
\item{$\ell_{max}< \sqrt n$:}
Use a sliding window of size $2\sqrt n$ and maintain two fingerprints $F^R[i-\sqrt n-\ell_{max}+1, i-\sqrt n]$, and $F^F[i-\sqrt n+1,i-\sqrt n+\ell_{max}]$. Whenever these fingerprints match, report $P[i-\sqrt n]$.
\item{$\ell_{max}\geq \sqrt n$:} In this case, the algorithm uses a preprocessing phase first. 
\\
{\bf Preprocessing }\vspace{-1.5mm}
\begin{enumerate}
\item Set $\tilde\ell_{max}=\max\{\tilde\ell(m)\mid P[m] \text { is stored in $L$ as an $R_F$ or an $R_S$ entry}\}$.
\item\label{alg:onlykeepthemiddle} For every \finished $R_F$ in $L$ with midpoints  
$m_1 \kdots m_h$ 
remove $R_F$ from $L$ and add\\
$R_s(m,i)=(m, \tilde\ell(m), F^F(1,m),F^R(1,m))$ to $L$, for $m\in \{ m_{\floor{(h+1)/2}}, m_{\ceil{(h+1)/2}} \}$.
To do this, calculate $m_{\lfloor \frac{ 1+h}{2} \rfloor}= m_1+(\lfloor \frac{ 1+h}{2} \rfloor -1)(m_2-m_1)$ and
 $m_{\lceil \frac{ 1+h}{2} \rceil}= m_1+(\lceil\frac{ 1+h}{2} \rceil -1)(m_2-m_1)$.
Retrieve $F^F(1,m)$ and $F^R(1,m)$  for $m\in \{ m_{\floor{(h+1)/2}}, m_{\ceil{(h+1)/2}} \}$.
\item\label{alg:deleteshortones} Delete all \singles $(m_{k}, \tilde\ell(m_{k}), F^F(1,m_k),F^R(1,m_k))$ with $ \tilde\ell(m_k) \leq \tilde\ell_{max} - \sfrac{ \sqrt n}{2}$ from $L$.
\item For every palindrome $P[m]\in L$ set $I_1(m) :=  (m-\tilde\ell(m)-\sfrac{1}{2}\sqrt n,m-\tilde\ell(m)]$ and set $finished(m)$ := false.
\end{enumerate}
The resulting list is called $L^*$. \\
{\bf String processing}
At iteration $i$ the algorithm performs the following steps.
\begin{enumerate}
\item Read $S[i]$.  If there is a palindrome $P[m]$ such that $i\in$ $I_1(m)$, then store $S[i]$. 
\item\label{alg:lengthdetermined} If there is a midpoint $m$ such that $m+\tilde\ell(m) < i <m+\tilde\ell(m)+\frac{\sqrt n}{2}$, finished($m$) = false, and $S[m-(i-m)+1]\not=S[i]$, then set finished$(m)$ := true and $\ell(m)= i-m-1$.
\item 
If there is a palindrome $P[m]$ such that $i \geq \tilde\ell(m) + m + \frac{\sqrt n}{2}$, then discard $I_1(m)$.
\item\label{alg:laststepofexact} If $i=n$, then output  $\ell(m)$ and $m$ of all $P[m]$ in $L^*$ with $\ell(m)=\ell_{max}$.
\end{enumerate}
\end{itemize}
We analyze \B in the appendix.\vspace{-2.5mm}
\section{\algLog}\label{section:alglog}\vspace{-2.5mm}
In this section, we present an algorithm which reports one of the longest palindromes and uses only logarithmic space.
\aLog  has a multiplicative error instead of an additive error term.
Similar to \A we have special indices of $S$ designated as checkpoints
that we keep along with some constant size data in memory. The checkpoints are used to estimate the length of palindromes.
However, this time checkpoints (and their data) are only stored for a limited time.
Since we move from additive to multiplicative error we do not need checkpoints to be spread evenly in $S$.
At iteration $i$, the number of checkpoints in any interval of fixed length decreases exponentially with distance to $i$. 
%
The algorithm stores a palindrome $P[m]$ (as an \single or \current) until there is a checkpoint $c$ such that $P[m]$ was checked unsuccessfully against $c$. 
A palindrome is stored in the lists belonging to the last checkpoint with which is was checked successfully. 
In what follows we set $\delta\triangleq\sqrt{1+\varepsilon}-1$ for the ease of notation.
Every checkpoint $c$ has an attribute called $level(c)$.
It is used to determine the number of iterations the checkpoint data remains in the memory.
\newline\noindent{\bf Memory invariants.}
After algorithm \aLog
has processed $S[1,i-1]$ and before reading $S[i]$
it contains the following information:
\begin{enumerate}
\item Two \mfp up to index $i-1$, i.e., $F^F(1,i-1)$ and $F^R(1,i-1)$.
\item A list of checkpoints CL$_{i-1}$. 
For every $c\in $ CL$_{i-1}$ we have 
\begin{itemize}
\item $level(c)$ such that $c$ is in CL$_{i-1}$ iff $c\geq (i-1)-2(1+\delta)^{level(c)}$.
\item $fingerprint(c)=F^R(1,c)$
\item a list $L_c$. It contains all palindromes which were successfully checked with $c$, but with no other checkpoint $c' < c$. The palindromes in $L_c$ are either \singles or \currents (See \algA). 
\end{itemize} 
\item The midpoint $m_{i-1}^*$ and the length estimate  $\tilde\ell(m_{i-1}^*,i-1)$ of the longest palindrome found so far. 
\end{enumerate}
%
The algorithm maintains the following property.
If $P[m,i]$ was successfully checked with checkpoint $c$ but with no other checkpoint $c' < c$, then the palindrome is stored in $L_c$. 
The elements in $L_c$ are ordered in increasing order of their midpoint.
The algorithm stores palindromes as \singles and \currents. This time however, the length estimates are not maintained. Adding a palindrome to a current run works exactly (the length estimate is not calculated) as described in \algA.
\newline\noindent{\bf Maintenance. } 
At iteration $i$ the algorithm performs the following steps.\vspace{-2mm}
\begin{enumerate}
	\item Read S[i]. Update the \mfp to be $F^F(1,i)$ and $F^R(1,i)$.
	\item \label{log:maintaincheckpoints} For all $k\geq k_0=\sfrac{\log(1/\delta)}{\log(1+\delta)}$(The algorithm does not maintain intervals of size 0.)
	\begin{enumerate}
		\item If $i$ is a multiple of $\lfloor\delta(1+\delta)^{k-2} \rfloor$, then add the checkpoint $c=i$ (along with the checkpoint data) to CL$_i$.
			Set $level(c)=k$, $fingerprint(c)=F^R(1,i)$ and $L_c=\emptyset$. 
		\item\label{item:removeoldcheckpoints} If there exists a checkpoint $c$ with
			 $level(c)=k$ and $c < i-2(1+\delta)^k$, 
			then  prepend $L_c$ to $L_{c'}$ where $c'=max\{c''\ |\ c''\in$ CL$_i \ and\ c''>c\}$. Merge and create runs in $L_c$ if necessary (Similar to step \ref{improvement1} of \A). Delete $c$ and its data from CL$_i$. 
 	\end{enumerate}

	\item\label{item:logncomparingstep} For every checkpoint $c\in$ CL$_i$
	\begin{enumerate}
		\item  Let $m_c$ be the midpoint of the first entry in $L_c$ and $c'=max\{c''\ |\ c''\in$ CL$_i \ and\ c''<c\}$.
		
			If $i-m_c=m_c-c'$, then we \emph{check} $P[m]$ against $c'$ by doing the following: 
		\begin{enumerate}\itemsep0pt 
			\item If the left side of $m_c$ is the reverse of the right side of $m_c$ (i.e., $F^R(c', m_c)=F^F(m_c, i)$)
 				then move $P[m_c]$ from $L_c$ to $L_{c'}$ by adding $P[m_c]$ to $L_{c'}$:
				\begin{enumerate}
				\item If  $|L_{c'}| \leq 1$, store $P[m_c]$ as a \single.
				\item If $|L_{c'}| = 2$, create a run out of the \singles stored in $L_{c'}$ and $P[m_c]$.
				\item Otherwise, add $P[m_c]$ to the \current in  $L_{c'}$. 
				\end{enumerate}
			\item If the left side of $m_c$ is \emph{not} the reverse of the right side of $m_c$, then remove $m_c$ from $L_c$.	
			\item  If $i-m_c > \tilde\ell(m_i^*)$, then set $m_i^* = m_c$  and set $\tilde\ell(m_i^*) = i-m_c$.\label{item:mStarUpdate}
		\end{enumerate}
	\end{enumerate}
	\item\label{logreport} If $i=n$, then report $m_{i}^*$ and $\tilde\ell(m_{i}^*)$.
\end{enumerate}
\vspace{-3.5mm}
\subsection{Analysis}\vspace{-1.5mm}
\aLog relies heavily on the interaction of the following two ideas.
The pattern of the checkpointing and the compression which is possible due to the properties of overlapping palindromes (Lemma \ref{lem:midpointshavestructure}).
On the one hand the checkpoints are close enough so that the  length estimates are accurate (Lemma \ref{lem:LogAccurateEstimation}). The closeness of the checkpoints ensures that palindromes which are stored at a checkpoint form a run (Lemma \ref{lem:LcFormsarun}) and therefore can be stored in constant space.
On the other hand the checkpoints are far enough apart so that the number of checkpoints and therefore the required space is logarithmic in $n$. \\
%
We start off with an observation to characterize the checkpointing. 
Step \ref{log:maintaincheckpoints} of the algorithm creates a checkpoint pattern: 
Recall that the level of a checkpoint is determined when the checkpoint and its data are added to the memory.
The checkpoints of every level have the same distance.
A checkpoint (along with its data) is removed if its distance to $i$ exceeds a threshold which depends on the level of the checkpoint.
Note that one index of $S$ can belong to different levels and might therefore be stored several times. The following observation follows from Step \ref{log:maintaincheckpoints} of the algorithm.
\begin{observation}\label{obs:spacingcheckpoints} 
At iteration $i$, $\forall k \geq k_0 =\ceil*{ \frac{log(\frac{(1+\delta)^2}{\delta})}{log(1+\delta)}}$.
Let $C_{i,k}=\{c\in$ CL$_i\ |\ level(c)=k \}$.
\begin{enumerate}  \vspace{-4.4mm}
\item $C_{i,k} \subseteq [i-2(1+\delta)^k, i]$.
\item The distance between two consecutive checkpoints of $C_{i,k}$ is $\lfloor  \delta(1+\delta)^{k-2}\rfloor$.
\item $|C_{i,k}|=\ceil[\Big]{\frac{2(1+\delta)^k}{\lfloor  \delta(1+\delta)^{k-2 }\rfloor}} $.
\end{enumerate}
\end{observation}
This observation can be used to calculate the size of the checkpoint data which the algorithm stores at any time. \begin{lemma}\label{lem:numberOfCheckpoints}
At Iteration $i$ of the algorithm the number of checkpoints is in $O\left(\frac{\log(n)}{\varepsilon\log(1+\varepsilon)}\right)$.
\end{lemma}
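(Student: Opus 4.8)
\textbf{Proof plan for Lemma \ref{lem:numberOfCheckpoints}.}

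The plan is to bound the total number of checkpoints by summing $|C_{i,k}|$ over all relevant levels $k$, using Observation \ref{obs:spacingcheckpoints}. First I would identify the range of levels that can actually contribute: level $k$ is maintained only for $k \geq k_0 = \log(1/\delta)/\log(1+\delta)$, and at iteration $i$ a level-$k$ checkpoint sits in $[i-2(1+\delta)^k, i]$; once $2(1+\delta)^k \geq n$ (equivalently $k = O(\log n / \log(1+\delta))$) the interval covers the whole stream, and there is no point having larger levels — so the number of distinct levels in play is $O\!\left(\frac{\log n}{\log(1+\delta)} - k_0\right)$, which I will simplify to $O\!\left(\frac{\log n}{\log(1+\delta)}\right)$ (here using $\delta \in (0, \sqrt 2 - 1]$ so that $k_0$ is a nonnegative constant-ish quantity that does not blow the bound up).

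Next, for each fixed level $k$, Observation \ref{obs:spacingcheckpoints}(3) gives $|C_{i,k}| = \lceil 2(1+\delta)^k / \lfloor \delta(1+\delta)^{k-2}\rfloor \rceil$. The key estimate is that $\lfloor \delta(1+\delta)^{k-2}\rfloor \geq \tfrac12 \delta(1+\delta)^{k-2}$ for $k \geq k_0$ (this is exactly why $k_0$ is chosen so that $\delta(1+\delta)^{k-2} \geq 1$), so the ratio is at most $2(1+\delta)^k / (\tfrac12 \delta (1+\delta)^{k-2}) = 4(1+\delta)^2/\delta$, a quantity independent of $k$. Hence $|C_{i,k}| = O(1/\delta)$ for every level. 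Multiplying the number of levels by the per-level count gives a total of $O\!\left(\frac{\log n}{\delta \log(1+\delta)}\right)$ checkpoints.

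Finally I would translate back from $\delta$ to $\varepsilon$. Since $\delta = \sqrt{1+\varepsilon} - 1$, we have $1+\delta = \sqrt{1+\varepsilon}$, so $\log(1+\delta) = \tfrac12 \log(1+\varepsilon)$; and $\delta = \frac{\varepsilon}{\sqrt{1+\varepsilon}+1} = \Theta(\varepsilon)$ for $\varepsilon \in (0,1]$. Substituting, $\frac{1}{\delta \log(1+\delta)} = \Theta\!\left(\frac{1}{\varepsilon \log(1+\varepsilon)}\right)$, yielding the claimed bound $O\!\left(\frac{\log n}{\varepsilon \log(1+\varepsilon)}\right)$.

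\textbf{Main obstacle.} The one genuinely delicate point is handling the floor $\lfloor \delta(1+\delta)^{k-2}\rfloor$ in the denominator: if $\delta(1+\delta)^{k-2} < 1$ the floor is $0$ and the spacing formula is meaningless, so the whole argument hinges on the cutoff $k \geq k_0 = \log(1/\delta)/\log(1+\delta)$ being precisely the threshold past which $\delta(1+\delta)^{k-2} \geq \delta(1+\delta)^{k_0-2} = (1+\delta)^{-2} \geq 1/4 \cdot$(something) — I would need to check the constant carefully, possibly replacing the clean bound $\tfrac12\delta(1+\delta)^{k-2}$ by $\tfrac14\delta(1+\delta)^{k-2}$ or adjusting $k_0$, but this only changes hidden constants. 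The level-counting and the $\delta \leftrightarrow \varepsilon$ conversion are routine once that is pinned down.
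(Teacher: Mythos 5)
Your proposal is correct and follows essentially the same route as the paper's proof: bound the number of levels by $O(\log_{1+\delta} n)$, bound the per-level count via Observation \ref{obs:spacingcheckpoints} by an expression of order $1/\delta$ using $k\geq k_0$ to control the floor, multiply, and convert $\delta$ to $\varepsilon$ via $1+\delta=\sqrt{1+\varepsilon}$. Your flagged concern about the floor possibly vanishing near $k_0$ is legitimate (the paper's own inequality $\lfloor\delta(1+\delta)^{k-2}\rfloor\geq\tfrac12\delta(1+\delta)^{k-2}$ really needs $k\geq k_0+2$), but as you note it only shifts constants.
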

\begin{proof}
The distance between consecutive checkpoints of level $k$ is $\lfloor  \delta(1+\delta)^{k-2 }\rfloor$ and thus the number of checkpoints per stage is bounded by $$\left\lceil\frac{2(1+\delta)^k}{\lfloor  \delta(1+\delta)^{k-2 }\rfloor}\right\rceil 
\leq \left\lceil\frac{4(1+\delta)^k}{  \delta(1+\delta)^{k-2 }}\right\rceil 
\leq \frac{4(1+\delta)^2}{  \delta} + 1
= \frac{4(1+\varepsilon)}{  \sqrt{1+\varepsilon}-1} + 1 $$ $$
=\frac{4(1+\varepsilon)( \sqrt{1+\varepsilon}+1)}{ \varepsilon } + 1 
\leq \frac{24}{ \varepsilon} + 1$$ 
where the first inequality comes from the fact that $k\geq k_0$.
The number of levels is $\lceil\log_{1+\delta}(n)\rceil = \lceil 2\log_{1+\varepsilon}(n)\rceil$ and the number of checkpoints is therefore bounded by $(2\log_{1+\varepsilon}(n) +1) (\sfrac{24}{\varepsilon} + 1)=O\left(\frac{\log(n)}{\varepsilon\log(1+\varepsilon)}\right)$. The required space to store a checkpoint along with its data is constant.
\end{proof}
The space bounds of Theorem \ref{the:log} hold due to the following property of the checkpointing:
If there are more than three palindromes stored in a list $L_c$ for checkpoint $c$, then the palindromes form a run and can be stored in constant space as the following lemma shows. 
\begin{lemma}\label{lem:LcFormsarun}
At iteration $i$, let $c\in$ CL$_i$ be an arbitrary checkpoint. The list $L_c$ can be stored in constant space.
\end{lemma}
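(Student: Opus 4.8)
The plan is to show that the list $L_c$ always holds palindromes whose midpoints form a run in the sense of Definition~\ref{def:compressible}, so that Corollary~\ref{cor:runhasstructure} applies and the whole list collapses to a constant-size description (the \current/\single encoding from \algA). The argument splits into two parts: first, that the midpoints in $L_c$ are pairwise close enough to be treated as a run, and second, that once this holds the stored data is genuinely of constant size.

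First I would argue the closeness. Suppose $P[m]$ and $P[m']$ with $m<m'$ are both in $L_c$. By the invariant, every palindrome in $L_c$ was successfully checked against $c$ but against no earlier checkpoint. A successful check against $c$ at some iteration $j$ certifies $S[c+1,m]^R=S[m+1,j]$ with $j-m=m-c$, i.e. $P[m]$ has length at least $m-c$, so $\ell(m)\ge m-c$ (and likewise $\ell(m')\ge m'-c$). Since both midpoints lie to the right of $c$ and all the action happens within the window $[\,i-2(1+\delta)^{level(c)},\,i\,]$ tracked for $c$, the key point is to bound $m'-m$ in terms of $\ell^*:=\min\{\ell(m),\ell(m')\}\ge \min\{m-c,m'-c\}=m-c$. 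Here I would use Observation~\ref{obs:spacingcheckpoints}: the checkpoints of level $k=level(c)$ are only $\lfloor\delta(1+\delta)^{k-2}\rfloor$ apart, while $c$ itself is at distance at least something like $\delta(1+\delta)^{k-2}$ behind any midpoint it certified (a midpoint checked against $c$ and no earlier checkpoint cannot have an earlier checkpoint strictly between $c$ and itself at the relevant moment — otherwise it would have been stored there). Turning this into the bound $m'-m\le \ell^*/2$ needed for Definition~\ref{def:compressible} is the crux: one needs $m-c$ (a lower bound on $\ell^*$) to be at least twice the spread of the midpoints in $L_c$, which follows because that spread is at most $O(\delta(1+\delta)^{k-2})$ while $m-c$ is comparably sized but the geometry (checkpoint immediately preceding $m$ is $c'$ with $m-c' < m-c$, forcing $m - c \gg$ inter-checkpoint spacing when $\delta$ is small, i.e. $\varepsilon$ bounded) gives the factor. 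I expect this comparison — pinning down exactly why the midpoints in a single $L_c$ cannot be spread out by more than $\ell^*/2$ — to be the main obstacle, and it will lean on Lemma~\ref{lem:LogAccurateEstimation}-style spacing facts together with Observation~\ref{obs:spacingcheckpoints}.

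Second, granted that the midpoints in $L_c$ form an $\ell^*$-run, Corollary~\ref{cor:runhasstructure} tells us they are equally spaced and $S[m_1+1,m_h]$ has the periodic form $(ww^R)^{\ldots}$ with $w=S[m_1+1,m_2]$. Hence the entire list is recoverable from $m_1$, $m_2-m_1$, $h$, and the fingerprints $F^F(1,m_1),F^R(1,m_1),F^F(m_1+1,m_2),F^R(m_1+1,m_2)$ — which is exactly the \current encoding carried by the algorithm in Step~\ref{improvement1} and reused here. If $|L_c|\le 2$ the list is trivially constant-size (one or two \singles). So in all cases $L_c$ occupies $O(1)$ space.

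Putting it together: regardless of how many palindromes have been routed into $L_c$, they either number at most two (stored explicitly) or form a run (stored as a single \current). In both cases the representation is $O(1)$ words, which is the claim. I would also remark that the merging performed in Step~\ref{item:removeoldcheckpoints} when $L_c$ is prepended to $L_{c'}$ preserves this property, since concatenating two runs (or a run and a bounded number of singles) whose union again satisfies the closeness bound just produced yields another run — but this is really the same closeness argument applied to $L_{c'}$, so no new work is needed beyond the first step.
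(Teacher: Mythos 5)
Your overall skeleton matches the paper's: show the midpoints in $L_c$ are close enough that the periodicity lemma applies, then store them as a single \current (or as at most two \singles). But the step you yourself flag as ``the crux'' --- why the midpoints in one list $L_c$ cannot be spread out by more than the palindromes' common length --- is genuinely missing, and the route you sketch for it (comparing the spread directly to the checkpoint spacing of Observation~\ref{obs:spacingcheckpoints} and arguing that $m-c$ is ``comparably sized'') does not close on its own. Without more, nothing prevents a palindrome that was checked against $c$ many iterations ago from sitting far from one checked just now.

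The missing idea is that membership in $L_c$ gives you \emph{two} inequalities, not one. You correctly use that a successful check against $c$ forces $i-m\geq m-c$, i.e.\ $m\leq \frac{i+c}{2}$, and hence $\ell(m)\geq m-c$. But you also need the other half of the invariant: $P[m]$ has \emph{not yet} been checked against the preceding checkpoint $c'=\max\{c''\in \mathrm{CL}_i : c''<c\}$ --- if that check had happened and failed, the palindrome would have been evicted; if it had succeeded, the palindrome would live in $L_{c'}$. Hence $i-m<m-c'$, i.e.\ $m>\frac{i+c'}{2}$. Together these confine every midpoint of $L_c$ to the interval $\left(\frac{i+c'}{2},\frac{i+c}{2}\right]$ of length $\frac{c-c'}{2}$. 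Observation~\ref{obs:spacingcheckpoints} then gives $c-c'\leq \delta(1+\delta)^{k-2}<\frac{i-c}{2}$ (for the appropriate level $k$), from which $\ell(m)>\frac{i+c'}{2}-c=\frac{i-c}{2}+\frac{c'-c}{2}>\frac{c-c'}{2}$. So all midpoints are $\ell^*$-palindromes for $\ell^*=\frac{c-c'}{2}$ and span an interval shorter than $\ell^*$, which is exactly the hypothesis of Lemma~\ref{lem:midpointshavestructure}; that lemma (rather than Corollary~\ref{cor:runhasstructure}, whose hypothesis is the run condition you are still trying to establish) delivers the equal spacing and $ww^R$ structure needed for the constant-size \current encoding. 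Your second part (constant-size storage once the structure is known, and the $|L_c|\le 2$ case) is fine, and Lemma~\ref{lem:LogAccurateEstimation} is not needed anywhere in this argument.
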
 
\begin{proof}
We fix an arbitrary $c\in$ CL$_i$. For the case that there are less than three palindromes belonging to $L_c$, they can be stored 
as \singles in constant space.
Therefore, we assume the case where there are at least three palindromes belonging to $L_c$ and we show that they form a run.
Let $c'$ be the highest (index) checkpoint less than $c$, i.e., $c'=max\{c''\ |\ c''\in$ CL$_i \ and\ c''<c\}$.
We disregard the case that the index of $c$ is 1.
Let $k$ be the minimum value such that $(1+\delta)^{k-1}< i-c \leq (1+\delta)^{k}$.
Recall that $L_c$ is the list of palindromes which the algorithm has successfully checked against $c$ and not against $c'$ yet.
Let $P[m]$ be a palindrome in $L_c$. Since it was successfully checked against $c$ we know that $i-m \geq m-c$. Similarly, since $P[m]$ was not checked against $c'$ we have $i-m < m-c'$. 
Thus, for every $P[m]$ in $L_c$ we have $\frac{i+c'}{2} < m\leq\frac{i+c}{2}$.
Therefore, all palindromes stored in $L_c$ are in an interval of length less than $\frac{i+c}{2} -\frac{i+c'}{2}=\frac{c-c'}{2}$.
If we show that $\ell(m)\geq \frac{c-c'}{2}$ for all $P[m]$ in $L_c$,
then applying Lemma \ref{lem:midpointshavestructure} with $\ell^*=\frac{c-c'}{2}$ on the palindromes in $L_c$ implies that they are forming a run. The run can be stored in constant space in an \current.
Therefore, it remains to show that $\ell(m)\geq \frac{c-c'}{2}$.
We first argue the following:
$\label{eqnLog}
 c-c'\underset{Obs.\ \ref{obs:spacingcheckpoints}}{\leq} \delta(1+\delta)^{k-2}\underset{\delta\leq 1}{\leq}\frac{(1+\delta)^{k-1}}{2}\underset{Def.\ of\ k}{<} \frac{i-c}{2}
.$
Since $P[m]$ was successfully checked against $c$ and since $m > \frac{i+c'}{2}$ we derive that $\ell(m) > \frac{i+c'}{2} - c$. Therefore,
$\ell(m) > \frac{i+c'}{2} - c =\frac{i-c}{2} + \frac{c'-c}{2}\underset{(\ref{eqnLog})}{>} c-c'  + \frac{c'-c}{2} = \frac{c-c'}{2}.$
\end{proof}\vspace{-1.5mm}
The following lemma shows that the checkpoints are sufficiently close in order to satisfy the multiplicative approximation. 
\begin{lemma}\label{lem:LogAccurateEstimation}
\aLog reports a midpoint
$m^*$ such that w.h.p. $\frac{\ell_{max}}{(1+\varepsilon)}\leq\tilde\ell(m^*)\leq\ell_{max}$.
\end{lemma}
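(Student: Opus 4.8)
The plan is to show that the longest palindrome $P[m_{\max}]$ with $\ell(m_{\max})=\ell_{\max}$ is, at some iteration, successfully checked against a checkpoint that is close enough to yield the multiplicative guarantee, and that the reported estimate never overshoots $\ell_{\max}$. The easy direction is $\tilde\ell(m^*)\le\ell_{\max}$: by Step \ref{item:logncomparingstep}, whenever the algorithm sets $m_i^*=m_c$ and $\tilde\ell(m_i^*)=i-m_c$, it has just verified a fingerprint equality $F^R(c',m_c)=F^F(m_c,i)$, so (assuming no fingerprint collisions, w.h.p.) $S[c'+1,m_c]^R=S[m_c+1,i]$, whence $\ell(m_c)\ge i-m_c=\tilde\ell(m_c)$; in particular $\tilde\ell(m^*)\le\ell(m^*)\le\ell_{\max}$.

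For the lower bound, first I would fix $m=m_{\max}$ and consider the iteration $i^\dagger=m+\ell_{\max}$, the last index this palindrome covers. I want to exhibit a checkpoint $c$ with $m-c\le i^\dagger-m$ (so that $P[m]$ can be checked against $c$ at some iteration $\le i^\dagger$) and with $m-c\ge\ell_{\max}/(1+\varepsilon)$, so that the successful check against $c$ sets the estimate to at least $\ell_{\max}/(1+\varepsilon)$. Pick the level $k$ with $(1+\delta)^{k-1}<\ell_{\max}\le(1+\delta)^k$. By Observation \ref{obs:spacingcheckpoints}, at iteration $i^\dagger$ the level-$k$ checkpoints are present in $[i^\dagger-2(1+\delta)^k,i^\dagger]$, spaced $\lfloor\delta(1+\delta)^{k-2}\rfloor$ apart, and since $2(1+\delta)^k\ge 2\ell_{\max}/(1+\delta)\ge\ell_{\max}$ (using $\delta\le 1$), the interval $[m-\ell_{\max},m]=[i^\dagger-\ell_{\max}, m]$ contains at least one such checkpoint — call the largest one $c$. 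Then $i^\dagger-m=\ell_{\max}\ge m-c$, so $P[m]$ is eligible to be checked against $c$; and because consecutive level-$k$ checkpoints differ by at most $\delta(1+\delta)^{k-2}$, the gap from $c$ to $m$ satisfies $m-c\ge \ell_{\max}-\delta(1+\delta)^{k-2}$. The key inequality to verify is then $\ell_{\max}-\delta(1+\delta)^{k-2}\ge \ell_{\max}/(1+\varepsilon)$, i.e. $\delta(1+\delta)^{k-2}\le \ell_{\max}\cdot\varepsilon/(1+\varepsilon)$; using $\ell_{\max}>(1+\delta)^{k-1}$ and $1+\varepsilon=(1+\delta)^2$, the right side exceeds $(1+\delta)^{k-1}\bigl((1+\delta)^2-1\bigr)/(1+\delta)^2=(1+\delta)^{k-1}\delta(2+\delta)/(1+\delta)^2\ge\delta(1+\delta)^{k-2}$, which closes the gap.

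The remaining point is that the algorithm actually performs this check and records its outcome before iteration $n$. Here I would invoke the maintenance logic: $P[m]$ is inserted into some list $L_{c''}$ when first detected (after $\sqrt n$-type delay analogous to \A, but here at the first checkpoint past $m$) and is then migrated leftward toward lower-index checkpoints in Step \ref{item:logncomparingstep}(a) exactly when the distance condition $i-m_c=m_c-c'$ is met, provided the corresponding check succeeds. Since $P[m]$ is a genuine palindrome of length $\ell_{\max}\ge m-c$, every such check against a checkpoint $c'$ with $m-c'\le\ell_{\max}$ succeeds, so $P[m]$ is never deleted before reaching $c$; and $c$ itself is not removed from $\mathrm{CL}$ until its distance to the current index exceeds $2(1+\delta)^k\ge\ell_{\max}\ge i^\dagger-c$, so $c$ survives at least until iteration $i^\dagger$. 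Hence the check of $P[m]$ against $c$ (or an even closer checkpoint) occurs at some iteration $\le i^\dagger\le n$, and at that moment Step \ref{item:mStarUpdate} sets $\tilde\ell(m_i^*)\ge i-m\ge m-c\ge\ell_{\max}/(1+\varepsilon)$; since $\tilde\ell(m_i^*)$ is monotone nondecreasing, the final reported value in Step \ref{logreport} is at least $\ell_{\max}/(1+\varepsilon)$.

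\textbf{Main obstacle.} The delicate part is the bookkeeping in the last paragraph: making precise, from the migration-and-deletion rules of Step \ref{log:maintaincheckpoints} and Step \ref{item:logncomparingstep}, that $P[m]$ is never prematurely discarded and that the target checkpoint $c$ is still alive when the distance condition triggers the check — in particular ruling out the scenario where $c$ is deleted (its list $L_c$ prepended to $L_{c'}$) before $P[m]$ has been compared against $c'$. This requires tracking, as $i$ grows, which checkpoint currently "owns" $P[m]$ and checking the timing against the level-$k$ removal threshold $i-2(1+\delta)^k$; the arithmetic is routine once the invariant "$P[m]$ resides in $L_{c''}$ for the smallest surviving checkpoint $c''$ with $c''\le m$ that it has been successfully checked against" is stated and maintained.
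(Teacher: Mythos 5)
Your proposal follows essentially the same route as the paper's proof: the upper bound via the fingerprint equality at the last update of $m^*$, and the lower bound by sandwiching $\ell_{\max}$ between $(1+\delta)^{k-1}$ and $(1+\delta)^k$, locating a checkpoint within one level-$k$ spacing of $m_{\max}-\ell_{\max}$ (the paper aims at $m_{\max}-(1+\delta)^{k-1}$ instead, using the level-$(k-1)$ spacing, but the algebra with $(1+\delta)^2=1+\varepsilon$ is the same). One slip: you should take the \emph{smallest} level-$k$ checkpoint in $[m-\ell_{\max},m]$ (the one just above $m-\ell_{\max}$), not the largest — the largest satisfies $m-c\le\delta(1+\delta)^{k-2}$ rather than $m-c\ge\ell_{\max}-\delta(1+\delta)^{k-2}$; your subsequent inequalities are exactly those for the smallest one, so the fix is purely nominal. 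The bookkeeping issue you flag as the main obstacle (that the check is actually performed before $c$ or $P[m_{\max}]$ is evicted) is likewise asserted without detailed proof in the paper.
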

\begin{proof}
In step \ref{logreport} of iteration $i$, \aLog reports the midpoint and length estimate of $P[m^*]$. 
We first argue that $\tilde\ell(m^*)\leq\ell(m^*)\leq\ell_{max}$. 
Let $i'$ be the last time $\tilde\ell(m^*)$ was updated by step \ref{item:mStarUpdate} of the algorithm.
By the condition of step \ref{item:mStarUpdate}, $S[c'+1,m^*]$ is the reverse of $S[m^*+1,i']$, where $c'=2\cdot m^*-i'$.
Hence, we derive $\ell(m^*)\geq i'-m = \tilde\ell(m^*) $. Moreover, by the definition of $\ell_{max}$ we have $\ell_{max}\geq\ell(m^*)$.\\
We now argue $\frac{\ell_{max}}{(1+\varepsilon)}\leq\tilde\ell(m^*)$.
Let $P[m_{max}]$ be a palindrome of maximum length, i.e., $\ell(m_{max})=\ell_{max}$.
Let $k$ be an integer such that $(1+\delta)^{k-1}<\ell_{max}\leq (1+\delta)^k$. 
Consider $\tilde\ell(m^*)$ after the algorithm processed $S[1,i']$, where $i'=m_{max}+(1+\delta)^{k-1}$.
By Observation \ref{obs:spacingcheckpoints}, there is a checkpoint in interval $[i'-2\cdot(1+\delta)^{k-3},i'-2\cdot(1+\delta)^{k-1}+\delta(1+\delta)^{k-3}]$. Let $c$ denote this checkpoint. \aLog successfully checked $P[m_{max}]$ against this checkpoint and therefore the value $\tilde\ell(m^*)$ is set to at least $m_{max}-c$.
We have $m_{max}-c\geq (1+\delta)^{k-1}-\delta(1+\delta)^{k-3}$. 
Thus,  
we have $\tilde\ell(m^*)\geq m_{max}-c\geq (1+\delta)^{k-1}-\delta(1+\delta)^{k-3}$. 
Therefore,
$\frac{\ell_{max}}{\tilde\ell(m^*)}\leq
\frac{(1+\delta)^k}{(1+\delta)^{k-1}-\delta(1+\delta)^{k-3}}
=\frac{(1+\delta)^3}{(1+\delta)^2-\delta}\leq(1+\delta)^2=1+\varepsilon$, 
where the last equation follows from 
$\delta=\sqrt{1+\varepsilon}-1$. 
\end{proof}
We are ready to prove Theorem \ref{the:log}. The correctness follows from Lemma \ref{lem:LogAccurateEstimation}. Lemma \ref{lem:numberOfCheckpoints} and Lemma \ref{lem:LcFormsarun} yield the claimed space.
In every iteration the algorithm processes every checkpoint in CL$_i$ in constant time. The number of checkpoints is bounded by Lemma \ref{lem:numberOfCheckpoints}.

\begin{proof}[Proof of Theorem \ref{the:log}]
In this proof, similar to Theorems \ref{the:firstpassisgood} and \ref{the:main},
we assume that the fingerprints do not fail w.h.p. as \aLog, similar to \A, does not take more than $n^2$ fingerprints during the processing of any input of length $n$.\\
\emph{Correctness:}
The correctness of the algorithm follows from Lemma \ref{lem:LogAccurateEstimation}.
It remains to argue that the space and time are not exceeded.
\newline
\emph{Space:}
The space required by \aLog is dominated by space needed to store the palindromes (corresponding midpoints and fingerprints) in $L_c$ for all $c\in$ CL$_i$.
Lemma \ref{lem:LcFormsarun} shows that for any $c\in$ CL$_i$ the list $L_c$ can be stored in constant space. Furthermore, Lemma \ref{lem:numberOfCheckpoints} shows that there are  $O\left(\frac{\log(n)}{\varepsilon\log(1+\varepsilon)}\right)$ elements in CL$_i$.\\
\emph{Running time:}
The running time is determined by step \ref{item:removeoldcheckpoints} and \ref{item:logncomparingstep}.
The algorithm goes in every iteration through all checkpoints in CL$_i$ which has
$O\left(\frac{\log(n)}{\varepsilon\log(1+\varepsilon)}\right)$ elements as Lemma \ref{lem:numberOfCheckpoints} shows.
For each checkpoint the steps (\ref{item:removeoldcheckpoints} and \ref{item:logncomparingstep})  take only constant time.
Thus, the required time to process the whole input is $O\left(\frac{n\log(n)}{\varepsilon\log(1+\varepsilon)}\right)$.
\end{proof}

%
\section{Lower bounds}\label{sec:LB}
In this section, we prove lower bounds on the space of  any algorithm finding the longest palindrome with probability $1$.
We give in Lemma \ref{lem:detLB} a probability distribution over input streams.
This distribution gives a lower bound on the required space of an optimal deterministic algorithm which approximates the length of the longest palindrome within an additive error.\\
Theorem \ref{thm:additiveLB} applies Yao's principle (for further reading see \cite{Motwani:97}) and shows a lower bound on the space of any randomized algorithm which approximates the length of the longest palindrome.
Recall that $\Sigma$ denotes the set of all input symbols.
A \emph{memory cell} is a memory unit to store a symbol from $\Sigma$.
Let $C(S,A)$ denote the required space of algorithm $A$ on the input stream $S$.
\begin{lemma}\label{lem:detLB}
Let $A$ be an arbitrary deterministic algorithm which approximates the length of the longest palindrome up to an additive error of $e_r$ elements with probability 1.
For any positive integers $m$ and $e_r$, there is a probability distribution $p$ over $\Sigma^{2m(2e_r+1)+4e_r}$ such that 
$E[C(S,A)] \geq m$, where $S\in \Sigma^{2m(2e_r+1)+4e_r}$ is a random variable and follows the distribution $p$.
\end{lemma}
\begin{proof}
We construct a set of input streams $\mathcal{S}$ and we discuss afterwards the expected space that $A$ needs to process a random input from $S\in \mathcal{S}$. 
We define the string $[2e_r]=1,2,3,...,2\cdot e_r$ and similarly $[2e_r]^R=2 \cdot  e_r,2 \cdot e_r - 1\kdots 2,1$.
Define $\mathcal{S}$ such that each element of $\mathcal{S}$ consists of $2m$ arbitrary symbols from $\Sigma$ separated by the string $[2e_r]$ in the first half and by $[2e_r]^R$ in the second half. 
Formally, $$ S=\{[2e_r]a_1[2e_r]a_2[2e_r]...a_m[2e_r][2e_r]^R a_{m+1}[2e_r]^R a_{m+2}\dots[2e_r]^R a_{2m}[2e_r]^R \ \mid \ \forall j, a_j\in\Sigma \}.$$
We define the input distribution $p$ to be the uniform distribution on $\mathcal{S}$. 
Let $\ell$ denote the length of a stream $S\in \mathcal{S}$, i.e., $\ell=2m(2e_r +1) + 4e_r$.
In the following, we show that $A$ needs to store $a_1 \kdots a_m$ in order to have an approximation of at most $e_r$.
\\
If $A$ does not store all of them, then $A$ behaves the same for two streams $S_1,S_2 \in \mathcal{S}$ 
with 
\begin{enumerate}
\item $S_1[1,\sfrac{\ell}{2}] \not= S_2[1,\sfrac{\ell}{2}]$
\item $S_1[1,\sfrac{\ell}{2}]=S_2[\sfrac{\ell}{2},\ell]^R =S_1[\sfrac{\ell}{2},\ell]^R $
\end{enumerate}
Let $x=j(2e_r+1)$ be the highest index of $S_1[1,\sfrac{\ell}{2}]$ such that $S_1[x] \not= S_2[x]$ for $j\in \{ 1\kdots m\}$.
If $A$  has the same memory content at index $\sfrac{\ell}{2}$ for $S_1$ and $S_2$, then $A$ returns the same approximation for both streams, but their actual longest palindromes are of lengths  $(m+1)2e_r$ and $(m-j+1)\cdot 2e_r$ where $j \geq 1$. Hence, no matter which approximation $A$ returns, it differs by more than $e_r$ of either $S_1$ or $S_2$.
Thus, $A$ must have distinct memory contents after reading index $\sfrac{\ell}{2}$.\\
In what follows we argue that the expected required space to store $a_1 \kdots a_m$ is $m$.\\
We use Shannon's entropy theorem (for further reading on information theory see \cite{shannon:48}) to derive a lower bound on the expected size of the memory. 
By Shannon's entropy theorem, the expected length of this encoding cannot be smaller than the distribution's entropy. 
Fix an arbitrary assignment for the variables $a_1 \kdots a_m$.  The probability for a string to $S$ to have the same assignment is $\frac{1}{|\Sigma|^m}$.
The entropy of an uniform distribution is logarithmic in the size of the domain. Hence, $log(|\Sigma|^m)=m\cdot log(	|\Sigma|)$ (or $m$ memory cells) is the lower bound on the expected space of $A$.
\end{proof}
Theorem \ref{thm:additiveLB} uses Yao's technique to prove the lower bound for randomized algorithms' space on the worst-case input, using deterministic algorithms' space on random inputs.
\begin{theorem}\label{thm:additiveLB}
Any randomized algorithm for approximating the length of the longest palindrome in a stream of length $\ell$ has the following property: In order to approximate the length up to an \emph{additive error} of $e_r$ elements with probability $1$ it must use $\Omega(\sfrac{\ell}{e_r})$ expected space.
\end{theorem}
\begin{proof}
Let $\mathcal{S}$ be the set of all possible input streams. Let $\mathcal{D}$ be the set of all deterministic algorithms.  By Yao's principle (\cite{Motwani:97}), we derive the following.
For any random variable $S_p$ over input streams which follows a probability distribution $p$ and  for any randomized algorithm $D_q$ which is a probability distribution $q$ on deterministic algorithms we have:
$$ \underset{A\in \mathcal{D}}{Min} E[C(s_p,A)]\leq \underset{I\in \mathcal{S}}{Max} E[C(I,D_q)]$$
Lemma \ref{lem:detLB} gives a lower bound for the left hand side of the above inequality: It shows that for a stream of length $2m(2e_r+1)+4e_r$ at least $m$ memory cells are required in order to achieve an additive error of at most $e_r$. Therefore, the required space for $\ell=2m(2e_r+1)+2e_r$ is $\Omega(\sfrac{\ell}{e_r})$. One can generalize this for any $\ell$ by using padding.
Thus, 
$$ \Omega(\sfrac{l}{e})\leq \underset{I\in \mathcal{S}}{Max} E[C(I,D_q)].$$
\end{proof}
%
%
Corollary \ref{cor:preciseAdditiveLB} is another direct implication that can be obtained by setting $e_r=1$.
\begin{corollary}\label{cor:preciseAdditiveLB}
There is no any randomized algorithm that computes the length of the longest palindrome in a given string precisely with probability $1$ and uses a sublinear number of memory cells.
\end{corollary}

\bibliography{biblio}
\appendix
\newpage
%
{\center \Huge\bf  Appendix}
\section{Useful fingerprint properties}

\begin{lemma}\label{lem:combinefingerprints}
(Similar to Lemma 1 and Corollary 1 of \cite{Breslauer:11}) Consider two substrings $S[i,k]$ and $S[k+1,j]$ and their concatenated string $S[i,j]$  where $1\leq i\leq k\leq j\leq n$.
\begin{itemize}
\item $F^{F}(i,j) =\big( F^{F}(i,k)+r^{k-i+1}\cdot F^{F}(k+1,j)\big)\text{ mod }p.$
\item $F^{F}(k+1,j) = r^{-(k-i+1)}\big( F^{F}(i,j)-F^{F}(i,k)\big)\text{ mod }p.     $
\item$F^{F}(i,k) = \big(F^{F}(i,j)-r^{k-i+1}\cdot F^{F}(k+1,j)\big)\text{ mod }p. $
\end{itemize}

\end{lemma}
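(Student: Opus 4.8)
The plan is to prove all three identities by a direct algebraic computation from the definition of $F^{F}$, observing that the second and third equations are merely rearrangements of the first. The cleanest route is to work with the prefix quantity $\phi^{F}(S[1,t]) = \sum_{s=1}^{t} S[s]\, r^{s} \pmod p$, which I would abbreviate as $\Phi_t$ (with $\Phi_0 = 0$), so that by definition $F^{F}(a,b) = r^{-(a-1)}\big(\Phi_b - \Phi_{a-1}\big) \pmod p$. All computations below are understood modulo the prime $p$.

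First I would establish the top identity. Substituting the prefix form gives
$$F^{F}(i,k) + r^{k-i+1}\cdot F^{F}(k+1,j) = r^{-(i-1)}\big(\Phi_k - \Phi_{i-1}\big) + r^{k-i+1}\cdot r^{-k}\big(\Phi_j - \Phi_k\big).$$
The key observation is that the prefactor of the second term collapses, since $r^{k-i+1}\cdot r^{-k} = r^{-(i-1)}$; this is exactly the bookkeeping that re-anchors the local exponents of the block $S[k+1,j]$ to the start of the combined block $S[i,j]$. Factoring out $r^{-(i-1)}$ and telescoping the $\Phi_k$ terms leaves $r^{-(i-1)}\big(\Phi_j - \Phi_{i-1}\big)$, which is precisely $F^{F}(i,j)$. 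Equivalently, one can expand $F^{F}(a,b) = \sum_{s=a}^{b} S[s]\, r^{s-a+1}$ and split the sum $\sum_{s=i}^{j}$ at index $k$; the shift factor $r^{k-i+1}$ then accounts for the difference between indexing $S[k+1,j]$ from its own start and indexing it from $i$.

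With the first identity in hand, the remaining two follow by elementary rearrangement. Subtracting $F^{F}(i,k)$ from both sides yields $F^{F}(i,j) - F^{F}(i,k) = r^{k-i+1}\cdot F^{F}(k+1,j)$, and multiplying through by $r^{-(k-i+1)}$ gives the second identity; subtracting $r^{k-i+1}\cdot F^{F}(k+1,j)$ instead gives the third. These steps are valid because $r$ is invertible modulo $p$: $p$ is prime and $r$ is a nonzero residue, so $r^{-1}$ (and hence $r^{-(k-i+1)}$) exists and the mod-$p$ arithmetic used in the definition of $F^{F}$ is well-defined.

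There is no deep obstacle here; the content is entirely the exponent bookkeeping in the first identity, i.e.\ verifying that the multiplicative shift $r^{k-i+1}$ correctly realigns the two sub-fingerprints. The only point requiring a word of care is the well-definedness of the inverses, which I would dispatch by noting $r \not\equiv 0 \pmod p$ (the degenerate choice $r\equiv 0$ occurs with probability $O(1/p)$ and can be excluded, or simply assumed throughout).
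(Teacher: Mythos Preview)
Your proof is correct. The paper does not actually give its own proof of this lemma: it is stated with the parenthetical ``(Similar to Lemma~1 and Corollary~1 of \cite{Breslauer:11})'' and left unproved, so there is nothing to compare against beyond noting that your direct algebraic verification from the definition of $F^{F}$ is exactly the standard argument one would expect for such fingerprint identities.
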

The authors of \cite{Breslauer:11} show that, for appropriate choices
of $p$ and $r$, it is very unlikely that two different strings share the same fingerprint.

\section{Complementary Palindromes}
\begin{definition}\label{complpalindrome}
Let $f:\Sigma \rightarrow \Sigma$ be a function indicating a complement for each symbol in $\Sigma$.
A string $S\in\Sigma^n$ with length $n$ contains a \emph{complementary palindrome} of length $\ell$ with midpoint 
$m\in\{\ell\kdots n-\ell\}$ if $S[m-i+1]=f(S[m+i])$ for all $i\in\{1\kdots\ell\}$.
\end{definition}
The fingerprints can also be used for finding complementary palindromes:
If one changes the forward \mfp to be $F_c^F(1,l) = \left( \sum_{i=1}^l{f(S[i]) \cdot r^i} \right) \mod\ p $ (as opposed to $F^F(1,l) = \left( \sum_{i=1}^l{S[i] \cdot r^i} \right) \mod\ p $) in all algorithms in this paper, then we obtain the following observation. 
\begin{observation}
All algorithms in this paper can be adjusted to recognize complementary palindromes with the same space and time complexity. 
\end{observation}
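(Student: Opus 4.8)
The plan is to show that each of the three algorithms interacts with the input stream \emph{only} through a single fingerprint-based primitive — testing whether a candidate center $m$ with half-length $\ell$ induces a palindrome — and that this primitive can be replaced by an analogous test for complementary palindromes at no extra cost. First I would recall why the ordinary test works: since the reverse fingerprint $\phi^R(s)$ of a string equals the forward fingerprint $\phi^F(s^{\text{rev}})$ of its reversal, a candidate $S[m-\ell+1,m+\ell]$ is an (even) palindrome exactly when its left half reversed equals its right half, which by Lemma \ref{lem:breslauer2} holds w.h.p. iff $F^R(m-\ell+1,m) = F^F(m+1,m+\ell)$ after the standard normalization by powers of $r$ supplied by Lemma \ref{lem:combinefingerprints}. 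Every comparison the algorithms make reduces to an instance of this equality check; all remaining machinery (candidate maintenance, length estimation, the additive/multiplicative error analyses, and the w.h.p. bounds) only ever consumes the Boolean output of such checks.

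Next I would give the complementary analogue. By Definition \ref{complpalindrome}, $S[m-\ell+1,m+\ell]$ is a complementary palindrome iff $S[m-i+1]=f(S[m+i])$ for all $i\le\ell$, i.e., iff the reversed left half equals the symbol-wise complement $f(\cdot)$ of the right half. Writing $f(S)$ for the stream obtained by applying $f$ to each symbol, this is precisely the statement that the reversed left half of $S$ equals the right half of $f(S)$, which — again by Lemma \ref{lem:breslauer2} — holds w.h.p. iff $F^R(m-\ell+1,m) = F_c^F(m+1,m+\ell)$, where $F_c^F$ is the forward fingerprint taken over $f(S)$ rather than $S$. Thus the complementary test is obtained from the ordinary one by substituting $F_c^F$ for $F^F$ on the right-hand side; note that no assumption that $f$ be an involution is needed, since $f$ is applied to only one of the two halves.

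The crux is then to verify that maintaining $F_c^F$ costs exactly what maintaining $F^F$ costs. I would point out that $F_c^F$ is literally the ordinary forward fingerprint of the transformed stream $f(S)$, so it obeys the same incremental update rules (prepending and appending a symbol) and, crucially, the same composition identities of Lemma \ref{lem:combinefingerprints}; the normalization powers of $r$ are identical because they depend only on positions, not on symbol values. Since each $f(S[i])$ is computed in $O(1)$ time and occupies no extra registers, replacing the forward \mfp of type $F^F$ by $F_c^F$ throughout \A, \B, and \aLog leaves their space and time bounds unchanged while converting every palindrome test into a complementary-palindrome test.

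I expect the main obstacle to be the bookkeeping needed to confirm that all three algorithms really touch the stream only through this one primitive, and that the index offsets and powers of $r$ in the substituted test match those already in use. Since $F_c^F$ is structurally indistinguishable from $F^F$ — only the stored symbol values differ — this amounts to a verification rather than a new argument, and the probabilistic guarantees transfer verbatim via Lemma \ref{lem:breslauer2} applied to the relevant pair of strings.
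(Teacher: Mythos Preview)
Your proposal is correct and takes essentially the same approach as the paper: the paper's entire justification is the single sentence preceding the observation, which says to replace the forward master fingerprint $F^F$ by $F_c^F$, and you spell out in more detail why this substitution converts every palindrome check into a complementary-palindrome check at identical cost. One minor caveat worth being aware of: your claim that the ``remaining machinery \ldots\ only ever consumes the Boolean output of such checks'' is true of the algorithms' computation, but the \emph{space analysis} of \A\ and \aLog\ (via Observation~\ref{obs:constantNumberOfCandi} and Lemma~\ref{lem:LcFormsarun}) rests on the structural Lemmas~\ref{lem:midpointshavestructure}--\ref{lem:palindromesOnOneSideOfRun} about overlapping palindromes, whose complementary-palindrome analogues must also hold for the compression argument to go through; the paper leaves this equally implicit, and the analogues do hold by essentially the same periodicity arguments, so this is a verification rather than a gap.
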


\begin{proof}[Proof of Lemma \ref{lem:invariantsgood}]
Fix an arbitrary palindrome $P[m]$. First we assume $ \ell(m) < \sqrt n$. Then \sA reports $m$ and $\tilde\ell(m)$ in
step \ref{alg:onlyoutputlongpalindromes} of iteration $m+\sqrt n$ which is the iteration where the entire palindrome is in the sliding window. 
Furthermore, in the step \ref{alg:insideslidingwindow}, 
the algorithm checks for all 
$j\in\{\floor{\varepsilon\sqrt n},2\cdot\floor{\varepsilon\sqrt n},\kdots r\cdot\floor{\varepsilon\sqrt n},\sqrt n\}$, where $r$ is the maximum integer s.t. $r\cdot\floor{\varepsilon\sqrt n}<\sqrt n$,
if $F^R(m - j+1 ,m)=F^F(m +1, m+j )$,
then sets $\tilde\ell(m,i)=j$.
Let $j_m$ be the maximum $j$ such that $F^R(m - j+1 ,m)=F^F(m +1, m+j)$
is satisfied at iteration $m+\sqrt n$. 
Then $P[m]$ covers the \slidingpair with length $j_m$ but does not cover the \slidingpair with length $j_m + \floor{\varepsilon \sqrt n}$. Since \sA sets $\tilde\ell(m)=j_m$ we have $\ell(m) - \varepsilon \sqrt n < \tilde\ell(m) \leq \ell(m)$.\\
Now we assume $\ell(m) \geq \sqrt n$. Step \ref{enu:NewPalindromeInSlidingWindow} of iteration $m+\sqrt n$ adds $R_S(m,i)$ to $L_{i-1}$. 
We show for every $i\geq m+\sqrt n$ that the following holds: After \sA read $S[1,i]$ (iteration $i$) we have  $\ell(m,i)-\varepsilon \sqrt{n} < \tilde\ell(m,i) \leq \ell(m,i)$. We first show the first inequality and afterwards the second.
Define $i'\leq i$ to be the last iteration where the algorithm updated $\tilde\ell(m,i')$ in Step \ref{enu:CompareWithMidpoints}, i.e., it sets $\tilde\ell(m,i')=\ell(m,i')=i'-m$.
\begin{itemize}
\item $\ell(m,i)-\varepsilon \sqrt{n} < \tilde\ell(m,i)$: 
We first show $\ell(m,i)<i'+\varepsilon \sqrt n - m$ by distinguishing between two cases:
\begin{enumerate}
\item $ i' > i - \varepsilon \sqrt n:$  By definition of $\ell(m,i)$, we have $\ell(m,i)\leq i-m$. And thus $\ell(m,i) < i'+\varepsilon \sqrt n - m$.
\item $ i' \leq i - \varepsilon \sqrt n :$ Since the estimate of $m$ was updated at iteration $i'$
we know that there is a checkpoint at index $2m-i'$ and therefore we know that due to step \ref{checkpointing} there is a checkpoint at index $2m- (i'+\floor{\varepsilon \sqrt n})$.
Since $i'$ is the last iteration where the estimate of $P[m]$ was updated we infer that $S[2m- (i'+\floor{\varepsilon \sqrt n}) ,m]$ was not the reverse of $S[m+1,i'+\floor{\varepsilon \sqrt n}]$.
Hence, $\ell(m,i) < i' + \floor{\varepsilon \sqrt n} - m \leq i' + \varepsilon \sqrt n - m$.
\end{enumerate}
With $\ell(m,i)< i'+\varepsilon \sqrt n-m$ we have 
$$\ell(m,i)< i'+\varepsilon \sqrt n-m=\ell(m,i')+\varepsilon \sqrt n=\tilde\ell(m,i')+\varepsilon \sqrt n=\tilde\ell(m,i)+\varepsilon \sqrt n.$$ The last equality holds since $i'$ was the last index where $\tilde\ell(m,i)$ was updated.

\item $\tilde\ell(m,i) \leq \ell(m,i)$:
Whenever $\tilde\ell(m,i)$ is updated to $\ell(m,i')$ by the algorithm, this means that $F^F(m+1,i')=F^R(2m-i',m)$ and since we assume that fingerprints do not fail we have that $S[m-\tilde\ell(m,i')+1,m]$ is the reverse of $S[m+1,m+\tilde\ell(m,i')]$.
It follows that $\tilde\ell(m,i)=\tilde\ell(m,i')$ and $\ell(m,i) \geq \ell(m,i')$. 
\end{itemize}
Furthermore, step \ref{sALaststep} reports $L_n$ at iteration $n$ which includes $m$ and $\tilde\ell(m)$.
\end{proof}
\sA requires linear space in the worst-case. 
The algorithm stores a list of all $\sqrt{n}$-palindromes
For an $S$, containing a linear number of $\sqrt n$-palindromes. 
\sA requires linear worst-case space. As an example for such a stream consider
$S=a^{n}$ with $a\in \Sigma$. All indices in the interval $[\sqrt n,n-\sqrt n]$ are midpoints of  $\sqrt{n}$-palindromes.
%
%
%

\section{\A}

\begin{proof}[Proof of Corollary \ref{cor:runhasstructure}]
We prove this by induction. Suppose $j_0$ is the highest index where $m_{j_0}<m_1+\ell^*$. By Definition \ref{def:compressible}, we have $j_0 \geq 3$. We start by proving the induction basis.
By Lemma \ref{lem:midpointshavestructure}, the claim holds for $m_1,m_2\kdots m_{j_0}$, i.e., they are equally spaced and $S[m_1+1,m_{j_0}]$ is a prefix of $ww^R\dots ww^R$. For the inductive step we assume that the claim holds for $m_{j-1}$. Consider the midpoints $m_{j-1},m_j,m_{j+1}$. Since $m_{j+1}-m_{j-1}\leq \ell^*$,  Lemma \ref{lem:midpointshavestructure} shows that those midpoints fulfill the claimed structure.
\end{proof}

\begin{proof}[Proof of Lemma \ref{lem:palindromesOnOneSideOfRun}]
We prove the first case where $m_j$ is in the first half, i.e., $j<{\frac{h+1}{2}}$. The other case is similar.
By Corollary \ref{cor:runhasstructure}, $S[m_1,m_h]$ is of the form $ww^Rww^R...$ and  $m_j=m_1+(j-1)\cdot |w|$, where $w$ is $S[m_1+1,m_2]$. Define $m_0$ to be the index $m_1-|w|$. Since $\ell(m_1,i) \geq \ell^* \geq |w|$ we have $S[m_0+1,m_1]=w^R$.\\
By Corollary \ref{cor:runhasstructure}, we have that $S[m_0+1,m_j]^R = S[m_j+1,m_{2j}]$. This implies $\ell(m_j,i)\geq j\cdot |w|$. Define $k$ to be $\ell(m_j,i) - j\cdot |w|$. We show that $k=\ell(m_0,i)$. By definition, $k$ is the length of the longest suffix of $S[1,m_0]$ which is the reverse of the prefix of $S[m_{2j}+1,n]$. Corollary \ref{cor:runhasstructure} shows that $S[m_{2j}+1,m_{2j}+\ell^*]$ is equal to $S[m_{0}+1,m_{0}+\ell^*]$ as both are prefixes of $w^Rww^R...$. 
Therefore, $k$ is also the same as the length of the longest suffix of $S[1,m_0]$ which is the reverse of the prefix of $S[m_{0}+1,m_{0}+\ell^*]$, i.e., $k=max \{ k' | S[m_0 -k' + 1,m_0]^R= S[m_{0}+1,m_{0}+k']\}=\ell(m_0,i)$. Thus, $\ell(m_j,i)=\ell(m_0,i)+j\cdot |w|$.
\end{proof}

\begin{proof}[Proof of Lemma \ref{lem:runconstantsize}]
Fix an index $j$. We prove that we can retrieve $R_S(m_j,i)$ out of the \finished representation.
Corollary \ref{cor:runhasstructure} gives a formula to retrieve $m_j$ from the corresponding \finished.
Formally, $m_j=m_1+(j-1)\cdot(m_1-m_2)$.\\
Corollary  \ref{cor:runhasstructure} shows that $S[m_1 +1, m_h]$ follows the structure $ww^Rw^R\cdots$ where $w=S[m_1+1,m_2]$.\\ This structure allows us to retrieve $F^F(1,m_j),F^R(1,m_j)$, since we have
$F^F(1,m_j)=$\\ $\phi^F(S[1,m_1]ww^Rww^R\cdots)$.\\
We know argue that the length estimates have the same accuracy as \singles. Note that the proof of Lemma \ref{lem:invariantsgood} shows that after iteration $i$ and any $R_S(m,i)$ we have $\ell(m,i)-\varepsilon \sqrt{n} < \tilde\ell(m,i) \leq \ell(m,i)$.
We show that one can retrieve the length estimate for palindromes which are not stored explicitly by using the following equation. The equation is motivated by Lemma \ref{lem:palindromesOnOneSideOfRun}.
$$\tilde\ell(m_j,i)=\begin{cases}\tilde\ell(m_1,i)+(j-1)\cdot(m_2-m_1)&j<\frac{h+1}{2}\\ \tilde\ell(m_h,i)+(h-j)\cdot (m_2-m_{1}) & j>\frac{h+1}{2}\end{cases}.$$
Let $i'$ be the index where $R_F$ was finished.
We distinguish among three cases: 

\begin{enumerate}
\item $m_j=m_1$: Since the algorithm treats $P[m_j]$ as an \single in terms of comparisons, $\ell(m_j,i)-\varepsilon \sqrt{n} < \tilde\ell(m_j,i) \leq \ell(m_j,i)$ holds.   
\item $m_j \in \{m_{\lfloor \sfrac{(h+1)}{2}\rfloor}, m_{\lceil  \sfrac{(h+1)}{2} \rceil}, m_h \}$: At index $i'$ \A executes step \ref{enu:AlgoFinishesRun} and one can verify that $\ell(m_j,i')-\varepsilon \sqrt{n} < \tilde\ell(m_j,i') \leq \ell(m_j,i')$ holds. For all $i \geq i'$ palindrome $P[m_j]$ is treated as an \single in terms of comparisons. Thus, $\ell(m_j,i)-\varepsilon \sqrt{n} < \tilde\ell(m_j,i) \leq \ell(m_j,i)$ holds.
\item Otherwise we assume \withoutlog $1<j< \floor{\frac{h+1}{2}}$. 
Lemma \ref{lem:palindromesOnOneSideOfRun} shows that $\ell(m_j,i)-\tilde\ell(m_j,i)=\ell(m_1,i)-\tilde\ell(m_1,i)$.
We know
$0\leq\ell(m_1,i)-\tilde\ell(m_1,i)<\varepsilon\sqrt{n}$. 
Thus, $\ell(m_j,i)-\varepsilon \sqrt{n} < \tilde\ell(m_j,i) \leq \ell(m_j,i)$. 
\end{enumerate} 
\end{proof} 
\begin{proof}[Proof of Theorem \ref{the:firstpassisgood}]
Similar to other proofs in this paper we assume that fingerprints do not fail as we take less than $n^2$ fingerprints and by Lemma \ref{lem:breslauer2}, the probability that a fingerprint fails is at most $1/(n^4)$.
Thus, by applying the union bound the probability that no fingerprint fails is at least $1-n^{-2}$.\\
\emph{Correctness:} Fix an arbitrary palindrome $P[m]$.
For the case $\ell(m) < \sqrt n$ there is no difference between \sA and \A, so the correctness follows from Lemma \ref{lem:invariantsgood}.
In the following, we assume $\ell(m) \geq \sqrt n$.
Firstly, we argue that  $R_S(m,n)$ is stored in $\hat L_n$. 
At index $i=m+\sqrt n$, \A adds $P[m]$ to $\hat L_i$. 
The algorithm does this by using the longest sliding fingerprint pair which guarantees that 
if  $S[i-2\sqrt n + 1, i-\sqrt n ]$ is the reverse of $S[i-\sqrt n + 1, i ]$, then the fingerprints of sliding window are equal.
Moreover, a palindrome is never removed from $\hat L_i$ for $1\leq i \leq n$. Additionally, Lemma \ref{lem:runconstantsize} shows how to retrieve the midpoint. Hence,  $R_S(m,n)$ is stored in $\hat L_n$.\\
We now argue $\ell(m)-\varepsilon \sqrt{n} < \tilde\ell(m) \leq \ell(m)$. Palindromes are stored in an \single, \finished and \current. 
Since we are only interested in the estimate $\tilde\ell(m)$ after the $n^{th}$ iteration of \sA and since
the algorithm finishes an \current at iteration $n$, we know that there are no \currents at after iteration $n$.

\begin{enumerate}
\item  $R_S(m,n)$ is stored as an \single. Since \singles are treated in the same way as in \sA,  $\ell(m)-\varepsilon \sqrt{n} < \tilde\ell(m) \leq \ell(m)$ holds by Lemma \ref{lem:invariantsgood}.

\item  $R_S(m,n)$ is stored in the \finished $R_F$. Then Lemma \ref{lem:runconstantsize} shows the correctness.
\end{enumerate}
Furthermore, the algorithm reports $\hat L_n$ step \ref{sALaststep} of iteration $n$.
\newline
\emph{Space:} 
The number of checkpoints equals $\floor{\sfrac{n}{\lfloor\varepsilon\sqrt n\rfloor}} \leq \sfrac{2n}{\varepsilon\sqrt n}= O( \sfrac{ \sqrt n}{\varepsilon})$, since $\varepsilon \geq \sfrac{1}{\sqrt n}$. Moreover,
there are $O(\sfrac{n}{\varepsilon})$ \slidingpairs which can be stored in $O(\sfrac{n}{\varepsilon})$ space.
The sliding window requires $2\sqrt n$ space.
The space required to store the information of all $\sqrt n$-palindromes is bounded by $O(\sqrt n)$:
By Observation \ref{obs:constantNumberOfCandi}, the number of \singles and \comps in an interval of length $\sqrt n$ is bounded by a constant.
Each \comp and each \single can be stored in constant space.
Thus, in any interval of length $\sqrt n$ we only need constant space and thus altogether $O(\sqrt n)$ space for storing the information of palindromes.
\newline
\emph{Running time:} 
The running time of the algorithm is determined by the number of comparisons done at lines \ref{alg:comparison} and \ref{alg:insideslidingwindow}.
First we bound the number of comparisons corresponding to line \ref{alg:comparison}.
For all $\sqrt n$-palindromes we bound the total number of comparisons by $O(\frac{n}{\varepsilon})$:
The \A checks only explicitly stored palindromes with checkpoints and therefore with \singles and at most 4 midpoints per \comp.
As shown in Observation \ref{obs:constantNumberOfCandi}, there is at most a constant number of explicitly stored midpoints in every interval of length $\sqrt n$.
In total, we have $O(\sqrt n)$ explicitly stored midpoints and $O(\sfrac{\sqrt{n}}{\varepsilon}) $ fingerprints of checkpoints. 
We only check each palindrome at most once with each checkpoint
\footnote{We can use an additional queue to store the index where the algorithm needs to \emph{check} a checkpoint with a palindrome. }.
Hence, the total number of comparisons is in order of $ O(\sfrac{n}{\varepsilon} )$. 
Now, we bound the running time corresponding to Step \ref{alg:insideslidingwindow}.
This step has two functions: There are $O(\sfrac{1}{\varepsilon})$  
\slidingpairs which are updated every iteration. This takes $O(\sfrac{1}{\varepsilon})$ time. Additionally, the middle of the sliding window is checked with at most  $O(\sfrac{1}{\varepsilon})$ \slidingpairs. Thus, the time for Step \ref{alg:insideslidingwindow} of the algorithm is bounded by $O(\sfrac{n}{\varepsilon})$. 
\end{proof}
%
%
\section{\algB}\label{analysingsecondpass}
The analysis of \algB is based on the observation that, after removing palindromes which are definitely shorter then the longest palindrome, at any time the number of stored uncertain intervals is bounded by a constant.
The following Lemma shows that only the palindromes in the middle are strictly longer than the other palindromes of the run. This allows us to remove all palindromes which are not in the middle of the run. The techniques used in the lemma are very similar to the ideas used in Lemma \ref{lem:palindromesOnOneSideOfRun}. Let $\hat L_n$ be the list after the first pass. 
\begin{lemma}\label{lem:maxinmid}
Let $m_1,m_2,m_3,...,m_h$ be midpoints of a maximal $\ell^*$-run in $S$.
For every $j\in \{1\kdots h\}\setminus\{\floor{(h+1)/2},\ceil{(h+1)/2}\}$, $$\ell(m_j)< max\{\ell(m_{\floor{(h+1)/2}}),\ell(m_{\ceil{(h+1)/2}})\}.$$
\end{lemma}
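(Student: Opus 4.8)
The idea is to use the structural description of a run provided by Corollary~\ref{cor:runhasstructure} together with the length formula of Lemma~\ref{lem:palindromesOnOneSideOfRun}, and then to exploit maximality of the run to bound the ``extra'' length contributed at the two ends. By Lemma~\ref{lem:palindromesOnOneSideOfRun}, writing $w=S[m_1+1,m_2]$ and $d=|w|=m_2-m_1$, for $j<\frac{h+1}{2}$ we have $\ell(m_j)=\ell(m_1)+(j-1)d$, and for $j>\frac{h+1}{2}$ we have $\ell(m_j)=\ell(m_h)+(h-j)d$. So on the left half $\ell(m_j)$ is strictly increasing in $j$ (since $d\geq 1$), and on the right half it is strictly decreasing in $j$; hence among the left-half midpoints the maximum length is attained at the largest index $<\frac{h+1}{2}$, and among the right-half midpoints at the smallest index $>\frac{h+1}{2}$. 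First I would make this monotonicity explicit and observe that it already reduces the claim to comparing an arbitrary $m_j$ with the relevant ``innermost'' midpoint on its side.

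The remaining work is to show that these innermost midpoints are exactly $m_{\floor{(h+1)/2}}$ and $m_{\ceil{(h+1)/2}}$, and to handle the parity of $h$ carefully. If $h$ is odd, then $\frac{h+1}{2}$ is an integer, $m_{(h+1)/2}$ is the unique central midpoint, the left half is $\{m_1,\dots,m_{(h-1)/2}\}$ and the right half is $\{m_{(h+3)/2},\dots,m_h\}$; the largest left-half index is $(h-1)/2=\floor{(h+1)/2}-? $ — here I would just check directly that $m_{(h+1)/2}$ dominates $m_{(h-1)/2}$ and $m_{(h+3)/2}$ using the formulas, which needs the observation $\ell(m_{(h+1)/2})\ge \ell(m_{(h-1)/2})+? $. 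This is where maximality enters: I need that the length contributed ``inward'' from the center is at least as large as what the monotone formula predicts for the neighbours, i.e.\ the central palindrome is long enough that extending the index by one more step toward the center does not lose length. Concretely, from $S[m_1+1,m_h]=(ww^R)^{\dots}$ and $\ell(m_1),\ell(m_h)\ge \ell^*\ge m_h-m_1$, one gets that each $P[m_j]$ already reaches past the opposite end of the run, so on the left half $\ell(m_j)=\ell(m_{\text{center-left}})-(\text{center-left}-j)d$ with equality of the same affine form all the way in; the strict inequality $\ell(m_j)<\max\{\ell(m_{\floor{(h+1)/2}}),\ell(m_{\ceil{(h+1)/2}})\}$ then follows because $d\ge 1$ and $j$ is strictly smaller (resp.\ larger) than the central index. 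If $h$ is even, $\floor{(h+1)/2}=h/2$ and $\ceil{(h+1)/2}=h/2+1$ are the two central midpoints; the left half is $\{m_1,\dots,m_{h/2}\}$ with maximum at $m_{h/2}$ and the right half is $\{m_{h/2+1},\dots,m_h\}$ with maximum at $m_{h/2+1}$, so the set removed in the statement, $\{1,\dots,h\}\setminus\{h/2,h/2+1\}$, is precisely everything that is \emph{not} one of those two maxima, and strict monotonicity finishes it.

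The step I expect to be the main obstacle is verifying that the two central midpoints genuinely have strictly larger length than \emph{all} others rather than merely weakly larger — in particular ruling out the degenerate possibility that $\ell(m_j)=\ell(m_{\floor{(h+1)/2}})$ for some non-central $j$. This cannot happen because the length along each half is a strictly increasing/decreasing affine function with slope $d=m_2-m_1\ge 1$, so two distinct indices on the same half give different lengths; and the cross-comparison (a left-half index versus $m_{\ceil{(h+1)/2}}$, or a right-half index versus $m_{\floor{(h+1)/2}}$) is subsumed by taking the $\max$ of the two central lengths. One subtlety worth stating carefully is what happens when $\tilde\ell$ vs.\ $\ell$ are at play — but Lemma~\ref{lem:maxinmid} is stated for the true lengths $\ell(\cdot)$ at the end of the stream, so I would work purely with $\ell(m_j)$ and invoke Corollary~\ref{cor:runhasstructure} and Lemma~\ref{lem:palindromesOnOneSideOfRun} verbatim, keeping the argument to a short case split on the parity of $h$.
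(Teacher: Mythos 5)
Your reduction via Lemma~\ref{lem:palindromesOnOneSideOfRun} is fine for even $h$ (and matches the paper there), but your handling of odd $h$ has a genuine gap. Writing $h=2d-1$ and $|w|=m_2-m_1$, the lemma you invoke gives the affine formulas only for $j<d$ and $j>d$; it deliberately excludes the central index $j=d$. Your proposal bridges this by asserting that the affine form holds ``all the way in'' to the center, i.e.\ $\ell(m_d)=\ell(m_1)+(d-1)|w|$. That cannot be right: the right-half formula would equally give $\ell(m_d)=\ell(m_h)+(h-d)|w|=\ell(m_h)+(d-1)|w|$, and these disagree whenever $\ell(m_1)\neq\ell(m_h)$. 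In fact $\ell(m_d)=(d-1)|w|+k$, where $k$ is the length of the longest suffix of $S[1,m_1]$ that is the reverse of the prefix of $S[m_h+1,n]$, and relating $k$ to $\ell(m_1)$ and $\ell(m_h)$ is exactly the missing work. Your supporting inequality $\ell^*\ge m_h-m_1$ is also unfounded: a run only bounds \emph{consecutive} gaps by $\ell^*/2$ (Definition~\ref{def:compressible}), so $m_h-m_1$ can be far larger than $\ell^*$ (e.g.\ $S=a^n$), and the individual palindromes need not reach past the opposite end of the run.

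The paper closes this gap with two claims you would need to supply. First, $\ell(m_d)\ge (d-1)|w|+\min\{\ell(m_1),\ell(m_h)\}$: by maximality $\ell'\coloneqq\min\{\ell(m_1),\ell(m_h)\}<\ell^*+|w|$, so by Corollary~\ref{cor:runhasstructure} both outer extensions $S[m_1-\ell'+1,m_1]$ and $S[m_h+1,m_h+\ell']$ lie inside the periodic pattern $ww^R\cdots$ and are therefore reverses of each other, giving $k\ge\ell'$. Second, $|\ell(m_h)-\ell(m_1)|<|w|$, again by maximality (otherwise the run would extend by one more midpoint). Chaining these, $\ell(m_d)>(d-2)|w|+\ell(m_1)\ge(j-1)|w|+\ell(m_1)=\ell(m_j)$ for $j<d$, and symmetrically for $j>d$. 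Without these two facts your strict inequality for the neighbours $m_{d-1}$ and $m_{d+1}$ does not follow.
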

\begin{proof}
If $h$ is even, the claim follows from Lemma \ref{lem:palindromesOnOneSideOfRun}.
Therefore, we assume $h=2d-1$ which means that $m_{\floor{(h+1)/2}}=m_{\ceil{(h+1)/2}}=m_d$.
Hence, we have to show that  $\ell(m_j) < \ell(m_d) $.
Define $w$ exactly as it is defined in Lemma \ref{lem:palindromesOnOneSideOfRun} to be $S[m_1 + 1, m_2]$.
Note that $S[m_{h-1} + 1, m_h]=w^R$.
We need two claims:
\begin{enumerate}
\item $\ell(m_d)\geq (d-1)|w|+ min\{ \ell(m_1), \ell(m_h) \} $\\
Proof: Suppose $\ell(m_d) < (d-1)|w|+ min\{ \ell(m_1), \ell(m_h) \} $.
We know that $S[m_1 + 1, m_d]$ is the reverse of $S[m_d + 1, m_h]$. 
Therefore, $\ell(m_d) =(d-1)|w| + k$ where $k$ is the length of the longest suffix of  $S[1,m_1]$ which is the reverse of the prefix of $S[m_h+1,n]$.
Formally, $k=max \{ k' | S[m_1 -k' + 1,m_1]^R= S[m_{h}+1,m_{h}+k']\}$.\\
Define $\ell'\triangleq min\{ \ell(m_1), \ell(m_h) \}$.
Thus, it suffices to show that $k \geq min\{ \ell(m_1), \ell(m_h) \} = \ell'$.
Observe, $\ell'<\ell^*+ |w|$ since otherwise $m_1 - |w|$ or $m_h + |w|$ would be a part of the run.
Since $m_1$ is the midpoint of a palindrome of length $\ell'<\ell^*+ |w|$, by Corollary \ref{cor:runhasstructure}, the left side of $m_1$, i.e., 
 $S[m_1 - \ell' + 1, m_1]$ is a suffix of length $\ell'$ of $ww^R\dots ww^R$. Similarly, $S[m_h+1,m_h + \ell']$ is a prefix of length $\ell'$ of $ww^R\dots ww^R$.
These two facts imply that  $S[m_1-\ell' +1, m_1]$ is the reverse of $S[m_h+1,m_h + \ell']$ and thus $k \geq \ell'$.
\QEDB
\item $|\ell(m_{h}) - \ell(m_{1}) |< |w|$ \\
Proof: \withoutlog, let $ \ell(m_h) \geq  \ell(m_1)$, then suppose $\ell(m_h)-\ell(m_1) \geq |w| $.
This implies that $\ell(m_h) \geq |w| + \ell(m_1) \geq |w| + \ell^*$.
By Lemma \ref{cor:runhasstructure}, we derive  that the run is not maximal, i.e., there is a midpoint of a palindrome with length of $\ell^*$ at index $m_h + |w|$.
A contradiction.
\QEDB
\end{enumerate}
Using these properties we claim that  $\ell(m_d) > \ell(m_j)$:
\begin{enumerate}
\item For $j < d:$ $\ell(m_d)\underset{Property\ 1}{\geq}(d-1)|w|+ min\{ \ell(m_1), \ell(m_h) \} \underset{Property\ 2}{>} (d-1)|w|+  \ell(m_1) - |w| =  (d-2)|w|+  \ell(m_1) - |w| \geq  (j-1)|w|+  \ell(m_1) \underset{Lemma\ \ref{lem:palindromesOnOneSideOfRun}}{=}\ell(m_j)$.
\item For $j > d:$ $\ell(m_d)\underset{Property\ 1}{\geq}(d-1)|w|+ min\{ \ell(m_1), \ell(m_h) \} \underset{Property\ 2}{>} (d-1)|w|+  \ell(m_h) - |w| =  (d-2)|w|+  \ell(m_h) - |w| \geq  (j-1)|w|+  \ell(m_h) \underset{Lemma\ \ref{lem:palindromesOnOneSideOfRun}.}{=}\ell(m_j)$.
\end{enumerate}
This yields the claim.
\end{proof}
We conclude this section by proving Theorem \ref{the:main} covering the correctness of the algorithm as well as the claimed space and time bounds. 
We say a palindrome with midpoint $m$ \emph{covers} an index $i$ if $|m - i| \leq \ell(m)$.
\begin{proof}[Proof of Theorem \ref{the:main}] 
In this proof, similar to Theorem \ref{the:firstpassisgood},
we assume that the fingerprints do not fail w.h.p. .\\
For the case $\ell_{max} < \sqrt n$ it is easy to see that the algorithm satisfies the theorem.\\
Therefore, we assume $\ell_{max} \geq \sqrt n$.\\
\emph{Correctness:} 
After the first pass we know that due to Theorem \ref{the:firstpassisgood} all $\sqrt n$-palindromes are in $L$.
The algorithm removes some of those palindromes and we argue that a palindrome which is removed from $L$ cannot be the longest palindrome. A palindrome removed in
\begin{itemize}
\item Step \ref{alg:onlykeepthemiddle} is, by Lemma \ref{lem:maxinmid},  strictly shorter than the palindromes of the middle of the run from which it was removed.
\item Step \ref{alg:deleteshortones} with midpoint $m$ has a length which is bounded by
$\tilde\ell_{max} - \tilde\ell(m) \geq \sfrac{ \sqrt n}{2}$.
We derive $\ell_{max}\geq \tilde\ell_{max} \geq \sfrac{ \sqrt n}{2}  + \tilde\ell(m) >\ell(m)$ where the last inequality follows from $\ell(m) - \sfrac{ \sqrt n}{2}  < \tilde\ell(m) \leq \ell(m)$. 
\end{itemize} 
Therefore, all longest palindromes are in $L^*$.
Furthermore, the exact length
of $P[m]$ is determined at iteration $m+\ell(m)+1$ since this is the first iteration where $S[m-(\ell(m)+1)+1]\not=S[m+\ell(m)+1]$.
In Step \ref{alg:lengthdetermined}, the algorithm sets the exact length $\ell(m)$.
If $\ell(m)=\ell_{max}$,
then the algorithm reports $m$ and $\ell_{max}$ in step \ref{alg:laststepofexact} of iteration $n$.\\
\emph{Space:}
For every palindrome we have to store at most one uncertain interval.
At iteration $i$, the number of uncertain intervals we need to store equals the number of palindromes which cover index $i$. We prove in the following that this is bounded by 4. We assume $ \varepsilon$ to be $\sfrac{1}{2}$ and $\ell_{max} \geq \sqrt n$. 
Define $\tilde\ell_{min}$ to be the length of the palindrome in $L^*$ for which the estimate is minimal. All palindromes in $L^*$ have a length of at least $\sqrt n$, thus $\tilde\ell_{min}\geq \sqrt n$.
We define the following intervals: 
\begin{itemize}
\item$\mathcal{I}_1=(i-\tilde\ell_{min}- \sqrt n, i-\tilde\ell_{min} ]$
\item$\mathcal{I}_2=(i-\tilde\ell_{min}, i]$
\end{itemize}
Recall that the algorithm removes all palindromes which have a length of at most $\tilde\ell_{max}-\sfrac{\sqrt n}{2}$ and thus $\tilde\ell_{min} \geq \tilde\ell_{max}-\sfrac{\sqrt n}{2}$.
Additionally, we know by Theorem \ref{the:firstpassisgood} that $\ell_{max}-\tilde\ell_{max} < \sfrac{\sqrt n}{2}$. We derive: $\ell_{max}-\tilde\ell_{min} \leq \ell_{max}-\tilde\ell_{max}+\sfrac{\sqrt n}{2} < \sfrac{\sqrt n}{2}  + \sfrac{\sqrt n}{2}   =\sqrt n$ and therefore
$\ell_{max}< \sqrt n+\tilde\ell_{min}$.
Hence, there is no palindrome which covers $i$ and has a midpoint outside of the intervals $\mathcal{I}_1$ and $\mathcal{I}_2$.
It remains to argue that the number palindromes which are centered in $\mathcal{I}_1$ and $\mathcal{I}_2$ and stored in $L^*$ is bounded by four: 
Suppose there were at least four palindromes in $\mathcal{I}_1$ and $\mathcal{I}_2$ which cover $i$.
Lemma \ref{lem:midpointshavestructure} shows that in any interval of length $\tilde\ell_{min}$ either the number of palindromes is bounded by two or they form an $\tilde\ell_{min}$-run. Thus there has to be at least one $\tilde\ell_{min}$-run.
Recall that step \ref{alg:onlykeepthemiddle} keeps for all \finisheds only the midpoints in the middle of the run. 
The first pass does not create \finisheds for all runs where the difference between two consecutive midpoints is more than $\sfrac{\sqrt n}{2}$ (See Definition \ref{defMaximalRun} and step \ref{stepaddtorun} of \A). 
Thus, there is a run where the distance between two consecutive midpoints is greater than $\sfrac{\sqrt n}{2}$.
By Lemma \ref{lem:palindromesOnOneSideOfRun}, the difference between $\ell(m)$ for distinct midpoints $m$ of one side of this run is greater than $\sfrac{\sqrt n}{2}$. Since the checkpoints are equally spaced with consecutive distance of $\sfrac{\sqrt n}{2}$, the difference between $\tilde\ell(m)$ for distinct midpoints $m$ of one side of this run is greater than $\sfrac{\sqrt n}{2}$ as well.
This means that just the palindrome(s) in the middle of the run, say $P[m]$, satisfy the constraint $\tilde\ell(m) \geq \tilde\ell_{max} - \sfrac{ \sqrt n}{2}$ and therefore the rest of the palindromes of this run would have been deleted. A contradiction.
Thus, there are at most two palindromes in both intervals and thus in total at most four palindromes and four uncertain intervals. This yields the space bound of $O(\sqrt n)$.
\\
\emph{Running time:}
As shown in Theorem \ref{the:firstpassisgood} and Lemma \ref{lem:smallellmax} the running time is in  $O(n)$.
The preprocessing and the last step of the second pass can be done in $O(n)$.
If the first pass returned an $\ell_{max} < \sqrt n$ then the running time of the second pass is trivially in $O(n)$.
Suppose $\ell_{max} \geq \sqrt n$. The preprocessing and the last step of \algB can be done in $O(n)$ time and they are only executed once.
The remaining operations can be done in constant time per symbol.
This yields the time bound of $O(n)$.
\end{proof}

%
%
%
%

\section{Variants of \algA}
In this section we present two variants of \A. 
The first variant is similar to \A, but instead of reporting all palindromes it reports a palindromes  $P[m]$ iff $ \ell(m) > \ell_{max} - \varepsilon \sqrt n$ assuming that $\ell_{max} \geq \sqrt n$.
If the algorithm is run on an input where $\ell_{max} < \sqrt n$, then the algorithm realizes this and does not report any palindromes. It require $\omega(\frac{\sqrt n}{\varepsilon})$ space to output all palindromes of size $\ell_{max}$ if  $\ell_{max} < \sqrt n$.
The described variant can be implemented in the following way:
Run \A and before returning the final list $L_n$ trim the list $L_n$ by removing all short palindromes. For details see the preprocessing of \algB (which is introduced in Section \ref{section:algB}).
This leads to Observation \ref{ObsVariant}.\\
The second variant reports one of the longest palindromes and the precise $\ell_{max}$ if  $\ell_{max}< \sqrt n$. In case $\ell_{max} \geq \sqrt n$, the algorithm detects this, but does not report the precise $\ell_{max}$.
The reason is that we can store a small palindrome in our memory and it is not possible to report $\ell_{max} (\geq \sqrt n)$ in space $o(n)$.
%
%
In following we present the proof of Lemma \ref{lem:smallellmax}.
\begin{proof}[Proof of Lemma \ref{lem:smallellmax}] 
The algorithm uses a sliding window of size $2\sqrt n$. At iteration $i$, the middle of the sliding window is $m=i -\sqrt n$. Let $\ell_{i-1} < \sqrt n$ be the length of the longest palindrome at index $i-1$. The algorithm does the following.\\
%
\emph{Algorithm:}
Initialize $k=0$.
At iteration $i$ compare $F^R(m -k + 1, m)$ with $F^F(m+1, m+k)$ for $k=\ell_{i-1}+1$ to $\sqrt n$ until they are unequal: Set $\ell_i$ to be the highest value for $k$ such that $F^R(m -k + 1, m)$ equals $F^F(m+1, m+k)$.
If $\ell_i > \ell_{i-1}$, then set $longest = P[m]$. 
If $\ell_i $ is smaller than  $\ell_{i-1}$, set $\ell_{i}= \ell_{i-1}$. After iteration $n$ output $\ell_{max}$ and $longest$.\\
\emph{Correctness:} For $k\in \mathbb{N}$ if $S[i-k+1,i]$
is the reverse of $S[i+1,i+k]$, then the fingerprints are equal.
Let $m$ be the first palindrome in $S$ with $\ell(m) = \ell_{max}$, then at iteration $m+\sqrt n$ the algorithm compares the fingerprint $F^R(i-(\ell_{max}+1)+1, i)$
with $F^F(i+1, i+\ell_{max}+1)$ and sets $\ell_m=\ell_{max}$.
The value of $\ell_m$ is not changed afterwards.\\
\emph{Space:} The algorithm stores two fingerprints, the longest palindrome found so far, and the sliding window of size $2\sqrt n$ which results in $O( \sqrt n )$ space. \\
\emph{Running time:} Sliding the two fingerprints takes $O(n)$ time in total.
The fingerprints are extended at most $\sqrt n$ times.
Storing the longest palindrome takes at most $O(\sqrt n)$ time and this is done at most $\sqrt n$ times. 
\end{proof}

\end{document}